\newtheorem{theorem}{Theorem}
\newtheorem{lemma}{Lemma}
\newtheorem{definition}{Definition}
\newtheorem{proposition}{Proposition}
\newtheorem{property}{Property}
\title{Differential-escort transformations and the monotonicity of the LMC-Rényi complexity measure}
\author{D. Puertas-Centeno\\ \footnotesize
	Departamento de F\'{\i}sica At\'{o}mica, Molecular y Nuclear, and \\ \footnotesize
	Instituto Carlos I de F\'{\i}sica Te\'orica y Computacional,\\\footnotesize Universidad de Granada, Granada 18071, Spain}
\begin{document}
\maketitle


Escort distributions have been shown to be very useful in a great variety of fields ranging from information theory, nonextensive statistical mechanics till	 coding theory, chaos and multifractals. In this work we give the notion and the properties of a novel type of escort density, the \textit{differential-escort densities}, which have various advantages with respect to the standard ones. We highlight the behavior of the differential Shannon, Rényi and Tsallis   entropies of these distributions. Then, we illustrate their utility to prove the monotonicity property of the LMC-Rényi complexity measure and to study the behavior of general distributions in the two extreme cases of minimal and very high LMC-Rényi complexity. Finally, this transformation allows us to obtain the Tsallis q-exponential densities as the differential-escort transformation of the exponential density. 

\section{Introduction}

The study of chaotic and complex systems have needed the development of mathematical tools able to capture the fundamental statistical properties of the system. Escort distributions  have been introduced in statistical physics for the characterization of multifractals systems \cite{Beck93}. These distributions $\{\tilde p_i\}$ conform a one-parameter class of transformations of an original probability distribution $\{p_i\}$ according to
$
\tilde p_i=\frac{p_i^q}{\sum_{i=1}^N p_i^q},
$ with  $q\in\mathbb R$.

This idea previously appeared in relation to the Rényi-entropy-based coding theorem \cite{Campbell65,Bercher09} and Rényi-entropy-based fractal dimensions \cite{Chhabra89}. The mathematical properties of the discrete escort distributions have been widely studied \cite{Abe03,Jizba04,Bercher12,Bercher13}. This concept can be easily extended to the continuous case. Given a real variable $x\in \mathbb R $ and a probability distribution $\rho(x)$, such that $\int_{\mathbb R} \rho(x)\,dx=1$, one has the escort distribution  \cite{Bercher10} defined as 
 \begin{equation}\label{escort}
E_q[\rho](x)\equiv\tilde\rho(x)=\frac{[\rho(x)]^q}{\int_{\mathbb R}[\rho(t)]^q\,dt},
 \end{equation}
on the assumption that $\int_{\mathbb R} \rho(x)^q\,dx<\infty$. Note that the parameter $q$ plays a focus role to highlight different regions of $\rho(x)$. These distributions play a relevant role in coding problems, non-equilibrium statistical mechanics \cite{Abe02,Tsallis04} and electronic structure \cite{sheila08,sheila09,vignat12}. A particular example is the q-exponential distribution 
\begin{equation}\label{q-exponential}
e_q(x)\propto (1+(q-1)|x|)^{\frac1{1-q}}
\end{equation}
which maximizes the  differential Rényi entropy
\begin{equation}\label{Renyi}
R_q[\rho]=\frac1{1-q}\log\left(\int_{\mathbb R} [\rho(x)]^q\,dx\right),
\end{equation}
and the differential  Tsallis entropy
\begin{equation}\label{Tsallis}
T_q[\rho]=\frac{1}{1-q}\left(1-\int_{\mathbb R}[\rho(x)]^q\,dx\right),
\end{equation}
subject to averarage-constraints governed by its escort distribution.
Of course, in the limit $q\to1$ the original distribution is recovered in Eq. \eqref{escort}, the exponential distribution is also recovered in Eq. \eqref{q-exponential} and the  differential Shannon   entropy 
\begin{equation}\label{Shannon}
S[\rho]=\lim_{q\to1}R_q[\rho]=\lim_{q\to1}T_q[\rho]=-\int_{\mathbb R} \rho(t)\log[\rho(t)]\,dt
\end{equation}
is respectively recovered  in Eqs \eqref{Renyi} and \eqref{Tsallis}.

The aim of this work is to introduce the notion of \textit{differential-escort transformation}, $\mathfrak E_\alpha$, and to study its basic mathematical properties (probability invariance, composition rule, scaling property,...). 
Then, we highlight the strongly regular behavior of the differential Shannon, Rényi and Tsallis   entropies under this transformation, observing that the entropic parameter naturally rescales similarly to the rescaling behavior recently found  by Korbel \cite{Korbel17} for the non-additivity parameter in Tsallis thermostatistics \cite{Tsallis88}. This behavior is related to the rescaling of the relative fluctuations of a system with a finite number of particles, and plays a relevant role in the deformed calculus developed by Borges \cite{Borges04} as it is discussed by Korbel himself. Moreover we also note that the q-exponential distribution is just the differential-escort transformation of the standard exponential distribution; so, differently to what happens to the the standard escort transformation of an exponential distribution which is another exponential. In fact, we show that the differential-escort transformation changes the behavior of the distribution tail in a deeper and more interesting manner than the standard escort transformation, which allows us to propose a possible characterization of power-law-decaying probability densities through Lemma \ref{lemma}.  On the other hand, we carry out a study of the behaviour of the LMC-Rényi complexity measure \cite{Lopez-Ruiz09,Sanchez-Moreno14} of the transformed densities.  Actually, the notion of \textit{differential-escort density} allows us to solve the  monotonicity problem of the LMC-Rényi complexity measure recently posed by Rudnicki et al \cite{Rudnicki16}. This fact, in turn, permits us to define the notions of \textit{low-complexity }and \textit{high-complexity }probability densities in the LMC-Rényi sense, as well as we characterize the entropic behavior of these probability densities.

 The structure of this work is the following: In section \ref{sec:Diff-esc} the differential- escort transformation  $\mathfrak E_\alpha$ is  defined and its basic mathematical properties are given. In section \ref{sec:Entropic} the entropic properties of the differential-escort densities are discussed. In section \ref{sec:LMC-R} the LMC-Rényi complexity of the differential-escort densities is studied and the monotonicity property of this measure is proven. In section \ref{sec:extreme} the entropic and  complexity behavior of a general probability density when it is deformed until to the low and high complexity limits is studied. Then, in section \ref{sec:q-exp} this transformation is applied to distributions of exponential, q-exponential and  general power-law decaying.
 Finally, in section \ref{sec:conc} some conclusions and open problems are given.

\section{Differential-escort transformation} \label{sec:Diff-esc}

In this section we give the notion and properties of the differential-escort transformation. Let us advance that the basic difference with the standard escort transformations is the normalization process. Indeed, while an escort density is normalized according to \eqref{escort}, in the differential-escort case the normalization is achieved through a variable change which imposes the conservation of the probability in any differential interval of the support, as we will see later.

\subsection*{The notion}
 
	Let us consider a probability density $\rho(x), x\in \Lambda \subseteq\mathbb R$, normalized, so that $\int_{\Lambda}\rho(x)dx = 1$; and let us denote $\mathcal D(\mathbb R)$ for the set of any distribution $\rho$ on any subset of $\mathbb R$.
        
\begin{definition}\label{definition}

         Let $\alpha \in\mathbb{R}$, and $\rho\in \mathcal D(\mathbb R)$ be a probability density with a connected support \footnote{We assume that the support is connected for easier reading. The definition could be easily extended to any distribution without disturbing its properties.} $\Lambda$. We define the transformation $\mathfrak E_\alpha:\mathcal D(\mathbb R)\longrightarrow \mathcal D(\mathbb R) $ as:
        
 \begin{equation}\label{def-E}
 \mathfrak {E}_\alpha[\rho](y)\equiv [\rho(x(y))]^\alpha,
 \end{equation}
 where $y=y(x)$ is a bijection defined by:

\begin{equation} \label{def-y}
\frac{dy}{dx}=[\rho(x)]^{1-\alpha},\quad y(x_0)=x_0,\,\,x_0\in\Lambda.
 \end{equation}
 The support $\Lambda_\alpha$ of the transformed density $\mathfrak E_\alpha[\rho]$ is given as $\Lambda_\alpha=y(\Lambda)=\{y\in\mathbb R\,\arrowvert \,y=y(x),x\in\Lambda\}$. To make an easier reading we will denote $ \rho_\alpha(y)\equiv \mathfrak E_\alpha[\rho](y)$, and generally we will take $x_0=0$, and $y(x)=\int_0^x [\rho(t)]^{1-\alpha}\,dt.$
\end{definition}


We remark that this definition is valid for any $\alpha\in\mathbb R$, contrary with the standard escort distribution for which  the parameter $\alpha$ is restricted by the condition  $\int_{\Lambda}[\rho(x)]^\alpha\,dx<\infty$ as already indicated in Eq. (\ref{escort}). This extension is possible since the support of a differential-escort density $\Lambda_\alpha$ does not remain invariant contrary to the standard escort case. 
As we will see later, for any probability density $\rho$, the operation (\ref{definition}) defines a transformed density $\rho_\alpha$ for any $\alpha\in\mathbb R$. 

Let us also point out that the selection of $x_0$ only implies a translation. In addition, when $\alpha=1$, one has that the operation  $\mathfrak E_\alpha$ corresponds to the identity, i.e., $\mathfrak E_1 [\rho]=\rho$; and when  $\alpha=0$, the operation $\mathfrak E_\alpha$ transforms $\rho$ to an uniform distribution with an unitary support, concretely $$\mathfrak E_0[\rho](x)=\left\{ \begin{array} {c} 1,\quad x\in[x_0\,-\,p_-,\,x_0\,+p_+] \\\hspace{-2.5cm} 0,\quad \text{otherwise}
\end{array}  \right.,$$
where $p_-=Prob[x<x_0]$ and $p_+=Prob[x>x_0] .$
\subsection*{The basic properties}

In the following we will give some basic properties of the differential-escort transformation.

Property \ref{diff-inv} is the most characteristic property of this transformation which consists in a strong probability invariance far beyond the mere conservation of the norm of the \textit{standard escort} case.
  
\begin{property}\label{diff-inv}
\textbf{Probability invariance}
\\
 Let $\alpha \in\mathbb{R}$ and $\rho$ be a probability density with a connected support $\Lambda$. Then, for any pair of points $x_1,x_2\in\Lambda$ and respectively $y_1=y(x_1)$ and $y_2=y(x_2)$ the identity
	\begin{equation}
\int_{x_1}^{x_2} \rho(x)dx=\int_{y_1}^{y_2} \rho_\alpha(y)dy,
	\end{equation}
or equivalently
\begin{equation}
Prob[x\in[x_1,x_2]]=Prob[y\in[y_1,y_2]],
\end{equation}
is fulfilled.
\begin{proof}
This property follows straightforwardly from (\ref{definition}) since
	\begin{equation*}
\int_{y_1}^{y_2} \rho_\alpha(y)\,dy=\int_{x_1}^{x_2} [\rho(x)]^{\alpha}\,\frac{dy}{dx}\,dx=\int_{x_1}^{x_2} [\rho(x)]^{\alpha}\,[\rho(x)]^{1-\alpha}\,dx=\int_{x_1}^{x_2} \rho(x)\,dx.
	\end{equation*}
\end{proof}
\end{property} 
This property makes a deep difference with escort distributions. While for the latter ones the conservation of the norm is  imposed dividing by a real number as indicated in (\ref{escort}), for the differential-escort distributions it naturally holds as a consequence of property \ref{diff-inv} since $\int_{\Lambda_\alpha}\rho_\alpha(y)\,dy=\int_\Lambda\rho(x)\,dx=1$. Moreover, a  similar property is fulfilled by a relevant transformation between auxiliary and physical probability densities in the context of quantum gravity \cite{Rastegin17}.

\begin{property}\label{compo-prop}
\textbf{Composition law}\\
	Let $\alpha$, $\alpha'$ $\in\mathbb R$,  then 
	\begin{equation}
\mathfrak E_{\alpha}[\mathfrak E_{\alpha'}[\rho]]=\mathfrak E_{\alpha'}[\mathfrak E_{\alpha}[\rho]]=\mathfrak E_{\alpha\alpha'}[\rho] 
	\end{equation}
	holds.
 
\begin{proof}
 By definition, $\mathfrak {E}_\alpha[\rho(x)](y)\equiv{\rho_\alpha}(y)=[\rho(x)]^\alpha, \, \, \,  $
              where $dy=[\rho(x)]^{1-\alpha}dx$.  Moreover,  one has
              $\mathfrak {E}_\gamma[\mathfrak {E}_\alpha[\rho(x)]](z)=\mathfrak {E}_\gamma[\rho_\alpha(y)](z)=[\rho_\alpha(y)]^\gamma=[\rho(x)]^{\alpha\cdot\gamma} $
                    where $dz=[\rho_\alpha(y)]^{1-\gamma}dy$, so that one has $dz=[\rho(x)]^{\alpha(1-\gamma)}[\rho(x)]^{1-\alpha}dx=[\rho(x)]^{1-\alpha\cdot\gamma}dx$.
\end{proof}
\end{property} 
This property is similar to the one of the standard escort transformations, but the latter one holds in a more restrictive sense by taking into account that the standard escort transformations are not typically well defined for any $\alpha\in\mathbb R$ . On the other hand this property allows us to find the inverse element of the differential-escort transformation
   \begin{equation}
  [\mathfrak E_\alpha]^{-1}=\mathfrak E_{\alpha^{-1}},\quad \alpha\not=0,
  \end{equation}
what allows us to say that $\mathfrak E_{\alpha\not=0}[\mathcal D(\mathbb R)]=\mathcal D(\mathbb R)$.

Let us finally give the composition rule between the differential-escort and the scaling transformations. For example, in the standard escort case defined in Eq. \eqref{escort},  the composition rule with the scaling transformation is given by $ E_\alpha[\rho^{(a)}]=E_\alpha[\rho]^{(a)}$, where  $\rho^{(a)}$ denotes the scaling transformed distribution $\rho^{(a)}(x)=a\rho(ax), a>0$. As stated in the following property, in the composition law for the differential-escort case the power operation is also inherited by the scaling parameter $a$.

\begin{property}
\textbf{Scaling Property}\\
Let $a \in\mathbb{R_+}$, $\alpha \in\mathbb{R}$, $\rho$ be a probability distribution, and the scaling transformed distribution $\rho^{(a)}(x)=a\rho(ax)$. Then, it holds
\begin{equation}
\mathfrak E_\alpha[\rho^{(a)}]=\mathfrak E_\alpha[\rho]^{(a^\alpha)}.
\end{equation}
  \begin{proof}
 From the hypotheses of this statement one has the associated differential-escort  distribution $\rho_\alpha(y)=\rho(x)^\alpha$, where $y(x)=\int_0^x[ \rho(t)]^{1-\alpha}dt$, or equivalently $y=\int_0^{x(y)}[ \rho(t)]^{1-\alpha}dt$. Later, we consider the differential-escort distribution of the scaling transformed, $(\rho^{(a)})_\alpha (z)=[\rho^{(a)}(x)]^\alpha$, with $z(x)=\int_0^x [\rho^{(a)}(t)]^{1-\alpha}dt$, so we have:
         $$ (\rho^{(a)})_\alpha (z)=a^\alpha [\rho(a x(z))]^\alpha, \quad z(x)=a^{-\alpha}\int_0^{ax} [\rho(t)]^{1-\alpha}dt.$$
Then, we can write $a^{\alpha}z=\int_0^{ax(z)} [\rho(t)]^{1-\alpha}dt.$ On the other hand, taking into account that $y=\int_0^{x(y)}[ \rho(t)]^{1-\alpha}dt$, we have that $ ax(z)=x(a^\alpha z)$ and finally 
\begin{equation*}
(\rho^{(a)})_\alpha (z)=a^\alpha [\rho( x(a^\alpha z))]^\alpha=a^\alpha\rho_\alpha(a^\alpha z)=\rho_\alpha^{(a^\alpha)}(z).
\end{equation*} 
     \end{proof}
\end{property}

\section{The entropic properties}\label{sec:Entropic}
  
The functional ingredients of differential Rényi and Tsallis entropies \eqref{Renyi}, \eqref{Tsallis} are the entropic moments of the probability distribution $\rho$,
\begin{equation}
W_q[\rho]=\int_{\mathbb R} [\rho(x)]^q\,dx.
\end{equation}   
In this section we will study the behavior of these entropy-like functionals for the differential-escort distributions, finding that it is much simpler than the corresponding one for the standard escort case. 
Interestingly, the rescaling  
\begin{equation}\label{escale}
q_\alpha=1+\alpha(q-1),
\end{equation}
for the parameter $q$, so much relevant in deformed algebra \cite{Borges04} and Tsallis thermostatistics \cite{Korbel17}, naturally appears in the entropic moment $W_q$ of the differential-escort distributions as shown in the next property.
\begin{property}\textbf{Rescaling of the entropic moments}
\label{LP_prop}\\        
Let $\rho$ be a probability distribution, $a \in\mathbb{R}$ and $\alpha \in\mathbb{R}$. Then the entropic moments $W_q[\rho]$ transform as 
\begin{equation}
W_q[\rho_\alpha]=W_{q_\alpha}[\rho],
\end{equation}
where $q_\alpha$ is given in \eqref{escale}. 


\begin{proof}
If $W_q[\rho_\alpha]<\infty$, then 
$$W_q[\rho_\alpha]=\int_{\mathbb R}[\rho_\alpha(y)]^q\,dy=\int_{\mathbb R}[\rho(x)]^{\alpha q}\,[\rho(x)]^{1-\alpha}\,dx=W_{1+(q-1)\alpha}[\rho].$$
In case that $W_q[\rho_\alpha]=\infty$, we consider the following equality between finite integrals
$$\int_{y_1}^{y_2}[\rho_\alpha(y)]^q\,dy=\int_{x_1}^{x_2}[\rho(x)]^{q_\alpha}\,dx$$
for any $x_1,x_2\in\Lambda$ and $y_{1,2}=y(x_{1,2})$. So, one has
$$\frac{W_q[\rho_\alpha]}{W_{q_\alpha}[\rho]}=\lim_{(x_1,x_2)\to (x_{m},x_{M})}\frac{\int_{y_1}^{y_2}[\rho_\alpha(y)]^q\,dy}{\int_{x_1}^{x_2}[\rho(x)]^{q_\alpha}\,dx}=\lim_{(x_1,x_2)\to (x_{m},x_{M})} 1=1.$$
\end{proof}

\end{property}
For completeness, note that when  $q=1$, then $q_\alpha=1$  and both $W_1[\rho_\alpha]=W_1[\rho]=1$ as one expects.\\

The rescaling behavior in this property is automatically inherited by the differential Shannon, Rényi and Tsallis  entropies, as pointed out in the next property.
\begin{property}\label{entr-prop}
\textbf{Entropies transformations}\\
Let $q,\alpha\in\mathbb R$ and $\rho$ be a probability distribution. Then, the differential  Shannon, R\'enyi and Tsallis entropies of the differential-escort distributions transform as

\begin{equation}
\frac{S[\rho_\alpha]}{S[\rho]}=\frac{R_q[\rho_\alpha]}{R_{q_\alpha}[\rho]}=\frac{T_q[\rho_\alpha]}{T_{q_\alpha}[\rho]}=\alpha.
\end{equation}

\begin{proof}
Taking into account that $\frac{1-q_\alpha}{1-q}=\alpha$, the equality for the Rényi and Tsallis entropies trivially follows from property \ref{LP_prop}. \\
The Shannon case could be simply understood as the limit case $q\to1$, however, for the sake of illustration, we give the pretty simple and nice natural proof: 
\begin{equation*}
S[\rho_\alpha]=-\int_{\Lambda_\alpha}\rho_\alpha(y)\log[\rho_\alpha(y)]\,dy=-\int_{\Lambda}\rho(x)\log[\rho(x)^\alpha]\,dx=\alpha\,S[\rho].
\end{equation*}
%
%
\end{proof}
%

\end{property}
 
Finally, as a direct consequence of the Jensen inequality we can assert that the Rényi entropy $R_q$  of the transformed density $\rho_\alpha$ is a concave function of $\alpha$ when $q>1$ and convex when $q<1$. Just as property (\ref{entr-prop}) claims, Shannon entropy has a linear behavior with the deformation parameter  $\alpha$. This behavior is given in the following property.
 
\begin{property}
\label{convexity_prop}

Let $\rho$ be a non-uniform probability density, then the R\'enyi differential entropy of the differential-escort distribution fulfills the following identity:
\begin{equation}
sgn\left(\frac{\partial^2R_q[\rho_\alpha]}{\partial\alpha^2}\right)=sgn(1-q).
\end{equation}
So, $ R_q[\rho_\alpha]$ is concave with $\alpha$ for $q>1$ and convex for $q<1$. When $\rho$ is an uniform density, then one  has $\frac{\partial^2R_q[\rho_\alpha]}{\partial\alpha^2}=0$.

\begin{proof}

One can easily compute
$$\frac{\partial^2 R_q [\rho_\alpha]}{\partial \alpha^2}=\frac{1-q}{\left(\int_{\Lambda}\rho^{q_\alpha}\right)^2} \left[\int_{\Lambda} \rho^{q_\alpha}\log^2\rho\int_\Lambda \rho^{q_\alpha}-\left(\int_\Lambda\rho^{q_\alpha}\log\rho\right)^2\right],$$
for any probability density.  On the other hand, due to Jensen's inequality one has
$$\left(\frac{\int_\Lambda \rho^{q_\alpha}\log\rho}{\int_\Lambda\rho^{q_\alpha}}\right)^2\le \frac{\int_\Lambda \rho^{q_\alpha}\log^2\rho}{\int_\Lambda \rho^{q_\alpha}},$$
where the equality holds if and only if $\rho$ is an uniform probability density. 
So, for non-uniform probability densities, it is straightforward to have that $sgn\left(\frac{\partial^2 R_q [\rho_\alpha]}{\partial \alpha^2}\right)=sgn(1-q).$
\end{proof}
\end{property}

\section{The LMC-Rényi Monotonicity} \label{sec:LMC-R}
The concept of monotonicity of a complexity measure was recently presented in \cite{Rudnicki16} and proven for the Fisher-Shannon and Crámer-Rao complexity measures. In this section we analyse the behavior of the  LMC-Rényi complexity measure under the differential-escort transformation, and then we show its monotonicity property. Let us first recall that the LMC-Rényi complexity measure is defined \cite{Sanchez-Moreno14,Lopez-Ruiz03,Lopez-Ruiz12} as
   \begin{equation}\label{def:LMC}
      C_{p,q}[\rho]=e^{R_p[\rho]-R_q[\rho]},\quad p<q.
   \end{equation}
  Note that the case ($p\to1, q= 2)$ corresponds to the plain LMC complexity measure \cite{Catalan02} 
      \begin{equation}
      C_{1,2} [\rho] = D[\rho] e^{S[\rho]} ,
      \end{equation}
which quantifies the combined balance of the average height of $\rho(x)$ (also called disequilibrium $D[\rho] = e^{R_2[\rho]}$), and its total spreading. This measure has been related with the \textit{degree of multifractality} of the distribution \cite{JMAngulo14} and widely applied in various contexts from electronic systems to seismic events \cite{Lopez-Ruiz03,Sen11,Pennini17}.
It satisfies interesting mathematical properties, such as invariance under scaling and translation transformations, invariance under replication and has a lower bound \cite{Lopez-Ruiz03} which is achieved by the uniform densities.
Obviously, this complexity measure inherits the regularity of the previous section which together with property \ref{entr-prop} allows us to write
\begin{property}\label{property7}
Let $p<q$ and $\alpha \in\mathbb{R}$. Then, the LMC-R\'enyi complexity of the probability distribution $\rho$  transforms as
\begin{equation}
C_{p,q}[\rho_\alpha]=\left(C_{p_\alpha,q_\alpha}[\rho]\right)^\alpha.
\end{equation}
\end{property}

Moreover a straightforward application of the Jensen inequality allows one to find 
\begin{property}\label{minimalbound}
Let $p<q$. Then, the LMC-R\'enyi complexity of the probability distribution $\rho$ is bounded as
 \begin{equation}
 C_{p,q}[\rho]\ge 1,
 \end{equation}
and the equality trivially holds when $\rho$ belongs to the class $\Xi$ of uniform distributions:
\begin{equation}
\Xi=\{\chi^{(a)}(x-x_0)|\, a>0,\, x_0\in\mathbb R\},\quad \chi^{(a)}(x)=\left\{\begin{array}{c} a^{-1}, x\in[0,a]\\0,\text{otherwise}
\end{array}\right..
\end{equation}
\end{property}

So, the LMC-Rényi complexity measure is universally bounded, and the family of minimizing densities is given by the class of uniform densities $\Xi$. Actually, this class remains invariant under differential-escort transformations. In fact, restricting us to $\Xi$, the transformation $\mathfrak E_\alpha$ just corresponds with a scaling change.
\begin{property}\label{uniforms}
\textbf{Uniformity transformations}\\
 Let $\alpha \in\mathbb{R}.$ Then, 
\begin{equation}
\rho\in\Xi\Longleftrightarrow\mathfrak E_\alpha [\rho]\in\Xi,\quad\alpha\not=0. 
\end{equation}
Particularly, one has that $\mathfrak E_\alpha[\chi^{(a)}]=\chi^{(a^\alpha)}.$ 
\begin{proof}
 	For any real $\alpha$, one has
 	$$[\chi^{(a)}(x)]^\alpha=a^{-\alpha},\quad \forall x\in[0,a];$$ 
 	and $dy=a^{\alpha-1}dx$ from Eqs. (\ref{def-E}) and (\ref{def-y}), respectively. Then, with $y(0)=0$ one has that $y(x)$ obeys the linear relation $y(x)=a^{\alpha-1}x$. And by taking into account that $y([0,a])=[0,a^\alpha]$ one obtains
 	\begin{equation*}
 	\mathfrak E_\alpha[\chi^{(a)}](y)=[\chi^{(a)}(x(y))]^\alpha=a^{-\alpha},\quad \forall y\in[0,a^\alpha];
 	\end{equation*}
 	or equivalently $\mathfrak E_\alpha[\chi^{(a)}]=\chi^{(a^\alpha)}$.  	
\end{proof}
\end{property} 
Let us now show that the LMC-R\'enyi complexity measure is monotone with respect to the class of differential-escort transformations $\{\mathfrak E_\alpha\}_{\alpha\in[0,1]}$ in the Rudnicki et al sense; this means that $C[\mathfrak E_\alpha[\rho]] \le C[\rho] $  for any density $\rho$. We will see that this inequality is a direct consequence of the concavity of the Rényi entropy \ref{convexity_prop} with respect to the parameter of the deformation $\alpha$.\\

 First we observe that
 \begin{equation}\label{second-derivates}
 \frac{\partial^2 R_q[\rho_\alpha]}{\partial q\partial \alpha }=\frac{-\alpha}{1-q}\frac{\partial^2 R_q[\rho_\alpha]}{\partial \alpha^2 },
 \end{equation}    
which together with property \ref{convexity_prop} gives
\begin{equation} 
sgn\left(\frac{\partial^2 R_q[\rho_\alpha]}{\partial q\partial \alpha }\right)=-sgn(\alpha),\quad \rho\notin\Xi.
\end{equation}
Then, if we consider the derivative with respect to $\alpha$ we have that
$
           \frac{\partial C_{p,q}[\rho_\alpha]}{\partial \alpha}=C_{p,q}[\rho_\alpha]\left(\frac{\partial R_{p}[\rho_\alpha]}{\partial \alpha}-\ \frac{\partial R_{q}[\rho_\alpha]}{\partial \alpha}\right),$ and so taking into account that $p<q$ one has 
\begin{equation}\label{eq:sgn}
sgn\left(\frac{\partial C_{p,q}[\rho_\alpha]}{\partial \alpha}\right)=-sgn\left(\frac{\partial^2 R_{q}[\rho_\alpha]}{\partial q\partial \alpha}\right)=sgn(\alpha),\quad \rho\notin\Xi.
\end{equation}

So, from Eq. \eqref{eq:sgn} it trivially follows the searched property:
\begin{property}\label{prop:monot}
Let $p<q$ . Then, the LMC-R\'enyi complexity of the probability distribution $\rho$ fulfills that
  \[
      C_{p,q}[\rho_{\alpha'}]\ge C_{p,q}[\rho_{\alpha}],
      \]  
  for any $\alpha'>\alpha>0$ or $\alpha'<\alpha<0$. 
Moreover, if $\rho\notin \Xi $ the equality only holds for $\alpha=1$ and the minimal value is only obtained  when $\alpha=0$. In the case that $\rho\in \Xi$, then     $C_{p,q}[\mathfrak {E}_\alpha[\rho]]=C_{p,q}[\rho]=1.$ 

Even more, for $\alpha=0$ the minimal possible value of the complexity measure is reached, $C_{p,q}[\mathfrak E_0 [\rho]]=1$. That is due to, for any $\rho$ one has that $\mathfrak E_0 [\rho]=\chi^{(1)}$. 

\end{property}

The last three properties can be summarized by means of the following theorem: 
\begin{theorem}\label{theorem}
Given the family of uniform distributions $\Xi$, and the class of transformations $\mathfrak E_\alpha$, then  the triplet $\left( C_{p,q},\Xi,\mathfrak E_\alpha\right)$ satisfies the monotonicity property of the LMC-R\'enyi measure of complexity.
\end{theorem}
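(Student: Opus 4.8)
The plan is to verify, one axiom at a time, that the triplet $(C_{p,q},\Xi,\mathfrak{E}_\alpha)$ meets the defining requirements of the monotonicity property of a complexity measure in the sense of Rudnicki et al., since the theorem is essentially a repackaging of Properties \ref{minimalbound}, \ref{uniforms} and \ref{prop:monot}. Recall that monotonicity of a measure $C$ with respect to a class $\mathcal{M}$ of minimal-complexity densities and a flow $\{\mathfrak{F}_t\}$ asks for: (a) $C$ attains its global minimum precisely on $\mathcal{M}$; (b) $\mathcal{M}$ is invariant under the flow; (c) the flow contracts every density onto $\mathcal{M}$, i.e.\ there is a limiting parameter value at which $\mathfrak{F}_t[\rho]\in\mathcal{M}$ for all $\rho$; and (d) $C[\mathfrak{F}_t[\rho]]$ is monotone along the flow, decreasing toward the minimal value as $t$ tends to that limit.

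First I would dispatch (a): Property \ref{minimalbound} gives $C_{p,q}[\rho]\ge 1$ with $C_{p,q}[\rho]=1$ for $\rho\in\Xi$, and the equality case of the underlying Jensen inequality shows the converse, so $\Xi$ is exactly the minimizing class. For (b), Property \ref{uniforms} states $\rho\in\Xi\iff\mathfrak{E}_\alpha[\rho]\in\Xi$ for $\alpha\ne 0$, while for $\alpha=0$ the remark after Definition \ref{definition} gives $\mathfrak{E}_0[\rho]=\chi^{(1)}\in\Xi$; indeed on $\Xi$ the transformation acts merely as $\chi^{(a)}\mapsto\chi^{(a^\alpha)}$, so invariance is immediate. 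For (c), that same remark gives $\mathfrak{E}_0[\rho]=\chi^{(1)}\in\Xi$ for every $\rho$, so at $\alpha=0$ the flow collapses any density onto $\mathcal{M}=\Xi$; and the composition law (Property \ref{compo-prop}), $\mathfrak{E}_\alpha\circ\mathfrak{E}_{\alpha'}=\mathfrak{E}_{\alpha\alpha'}$ with $\mathfrak{E}_1=\mathrm{id}$, shows that $\{\mathfrak{E}_\alpha\}_{\alpha\in[0,1]}$ is a genuine one-parameter semigroup interpolating between $\rho$ (at $\alpha=1$) and $\chi^{(1)}$ (at $\alpha=0$).

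The substantive ingredient is (d), and this is exactly Property \ref{prop:monot}, which I would simply invoke: for $\alpha'>\alpha>0$ one has $C_{p,q}[\rho_{\alpha'}]\ge C_{p,q}[\rho_\alpha]$, strict for $\rho\notin\Xi$ away from $\alpha=1$, with minimum $C_{p,q}[\mathfrak{E}_0[\rho]]=1$ attained at $\alpha=0$, and $C_{p,q}\equiv 1$ on $\Xi$. Assembling (a)--(d) gives the claimed monotonicity, so the only real analytic work was already done in proving the concavity of $R_q[\rho_\alpha]$ in $\alpha$ (Property \ref{convexity_prop}) and propagating its sign through Eqs.\ \eqref{second-derivates}--\eqref{eq:sgn}. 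I expect the only delicate point to be purely a matter of bookkeeping: checking that the present construction literally fits the Rudnicki et al.\ template, and in particular that $[0,1]$ is the appropriate contracting branch of the flow --- which it is, because by Property \ref{prop:monot} the analogous monotone behaviour also holds for $\alpha<0$ with $\alpha=0$ still the unique minimiser, so restricting attention to $\alpha\in[0,1]$ loses nothing.
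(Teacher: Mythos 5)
Your proposal is correct and follows essentially the same route as the paper: the theorem is proved there simply by assembling Properties \ref{minimalbound}, \ref{uniforms} and \ref{prop:monot} (together with the observation that $\mathfrak E_0[\rho]=\chi^{(1)}$), exactly the ingredients you check against the Rudnicki et al.\ template, with the analytic core residing in Property \ref{convexity_prop} and Eqs.\ \eqref{second-derivates}--\eqref{eq:sgn}. Your explicit itemization of the axioms (a)--(d) is merely a more structured presentation of the paper's one-line summary.
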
            

The comparison of this result with the monotonocity property of the Crámer-Rao and Fisher-Shannon complexity measures obtained by Rudnicki et al. \cite{Rudnicki16} allows us to observe that the class of differential-escort operations plays for the LMC-R\'enyi measure of complexity the same role than the class of convolution-with-the-Gaussian operations in the Crámer-Rao and Fisher-Shannon cases.




\section{Low and high complexity limits}\label{sec:extreme}
In this section we conduct a study of the behavior of the statistical properties of a general density, when deformed in extreme cases $\alpha\sim0$ and $\alpha\to +\infty $. To this end, we will first give three statements for the general case that will be useful in the study of the limit cases.

\begin{proposition}\label{prop1}
Let $\rho(x)$ a bounded density, then the entropic moments $W_q[\rho]$ satisfies
$$ W_q[\rho]<\infty \Longleftrightarrow q>q_c[\rho],$$
with $q_c[\rho]<1$.
\begin{proof}
Given a bounded probability density $\rho(x)$ then the proof trivially follows taking into account that Rényi entropy $R_q$ is decreasing in $q$, and that $R_q[\rho]\ge R_\infty[\rho]=-\log(\rho_{max})$.
\end{proof} 
\end{proposition}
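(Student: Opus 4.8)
The plan is to pass from the entropic moment $W_q[\rho]=\int_{\mathbb R}[\rho(x)]^q\,dx$ to the Rényi entropy $R_q[\rho]=\frac{1}{1-q}\log W_q[\rho]$, and to read off the claim from the monotonicity of $q\mapsto R_q[\rho]$ together with the boundedness of $\rho$. First I would record the elementary dictionary: for every $q\neq 1$ one has $0<W_q[\rho]\le\infty$, and hence $W_q[\rho]<\infty$ precisely when $R_q[\rho]$ is finite (if $W_q=\infty$ then $R_q=+\infty$ for $q<1$ and $R_q=-\infty$ for $q>1$, while $W_q>0$ rules out the divergence of $\log W_q$ at $-\infty$); the value $q=1$ is harmless since $W_1[\rho]=1<\infty$ trivially. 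So it suffices to locate the set $F[\rho]=\{q:R_q[\rho]\text{ finite}\}$.

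Next I would use two standard inputs. The first is that $q\mapsto R_q[\rho]$ is non-increasing on $\mathbb R$ (a Jensen/Hölder consequence); this forces $F[\rho]$ to be an interval unbounded to the right, so one may set $q_c[\rho]:=\inf F[\rho]$ and immediately obtain $W_q[\rho]<\infty$ for all $q>q_c[\rho]$ and $W_q[\rho]=\infty$ for all $q<q_c[\rho]$. The second input is boundedness, which enters through the tail value $R_\infty[\rho]=-\log\rho_{\max}$: when $\rho$ is bounded this is finite, and since $R_q[\rho]\ge R_\infty[\rho]>-\infty$ for every $q$, no $R_q$ equals $-\infty$; concretely $[\rho(x)]^q\le\rho_{\max}^{\,q-1}\,\rho(x)$ for $q\ge 1$ gives $W_q[\rho]\le\rho_{\max}^{\,q-1}<\infty$, so $[1,\infty)\subseteq F[\rho]$ and hence $q_c[\rho]\le 1$.

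The delicate point — and the one I expect to be the real obstacle — is the \emph{strict} inequality $q_c[\rho]<1$, i.e.\ that $W_q[\rho]<\infty$ persists for some $q$ slightly below $1$, because this is governed entirely by the decay rate of $\rho$ at infinity rather than by the bound $\rho\le\rho_{\max}$. The natural way to settle it is to use the tail behaviour relevant in the sequel (and in Lemma \ref{lemma}): if $\rho(x)=O(|x|^{-\beta})$ with $\beta>1$, then $[\rho(x)]^q$ is integrable whenever $q\beta>1$, hence for all $q>1/\beta$ with $1/\beta<1$, giving $q_c[\rho]\le 1/\beta<1$; in this generic situation $W_{q_c[\rho]}[\rho]=\infty$ as well, which yields the open-endpoint form of the equivalence. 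I would therefore close the argument by invoking this mild tail control, noting that it is what guarantees the threshold lies strictly below $q=1$ rather than exactly at it.
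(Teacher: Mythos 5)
Your argument follows the same route as the paper's one-line proof: the monotonicity of $q\mapsto R_q[\rho]$ makes the finiteness set an interval unbounded to the right, and boundedness of $\rho$ enters through $R_q[\rho]\ge R_\infty[\rho]=-\log\rho_{\max}>-\infty$ (equivalently, your direct estimate $W_q[\rho]\le\rho_{\max}^{\,q-1}$ for $q\ge1$), which pins the threshold at or below $q=1$. Where you go beyond the paper is in isolating the strict inequality $q_c[\rho]<1$ as the genuinely nontrivial part, and you are right to do so: boundedness alone only yields $q_c[\rho]\le1$. A bounded density with tail $\rho(x)\sim\bigl(x\log^2 x\bigr)^{-1}$ has $W_q[\rho]=\infty$ for every $q<1$ while $W_1[\rho]=1$, so its finiteness set is $[1,\infty)$ and neither the strict bound $q_c<1$ nor the open-endpoint equivalence $W_q<\infty\Longleftrightarrow q>q_c$ survives. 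The paper's proof, which invokes only the decrease of $R_q$ and the lower bound $R_\infty$, does not address this and silently assumes the kind of tail control you make explicit (power-law decay $O(|x|^{-\beta})$, $\beta>1$, giving $q_c\le1/\beta<1$), consistent with the examples listed immediately after the proposition. So your proposal is correct modulo the added tail hypothesis, and that hypothesis (or some substitute excluding borderline tails) is genuinely needed; as stated for arbitrary bounded densities, the proposition's claim $q_c[\rho]<1$ is not a consequence of the paper's argument.
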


For example, for an exponential-like decaying density one has $q_c[\rho]=0$, but for a power-law decaying density as $\mathcal O\left(x^{-\beta}\right)$ then $q_c[\rho]=1/\beta\in(0,1)$, or for any N-piecewise density $q_c[\rho]=-\infty$. 
On the other hand, it is easy to see that
\begin{equation}\label{eq:lambdac}
q_c[\rho_\alpha]=1-\frac{1-q_c[\rho]}{\alpha}.
\end{equation}
Deserves noting that if we take $\alpha_c=1-q_c[\rho]$ then $q_c[\rho_{\alpha_c}]=0$, what means that $\rho_{\alpha_c}$ has an infinite support $W_0[\rho_{\alpha_c}]=W_{q_c}[\rho]=\infty$, but all entropic moments with positive parameter $q$ are finite.

On the other hand, is easy to see that the LMC-Rényi complexity measure is not only bounded inferiorly but also superiorly.  
\begin{proposition}\label{prop2}
For any density $\rho\notin\Xi$ and any pair $p<q$ then
\begin{equation}\label{ineq:C}
1< C_{p,q}[\rho]<C_{p,\infty}[\rho]=\frac{\rho_{max}}{ \left\langle \rho^{p-1}\right\rangle^{\frac1{p-1}}}\,,
\end{equation}
contrary if $\rho\in\Xi$ then $C_{p,q}[\rho]=C_{p,\infty}[\rho]=1.$
\begin{proof}
		Note that, if $\rho\notin\Xi$ and  $q'>q>p$, then from Eq. \eqref{def:LMC} and property \ref{minimalbound} it follows that $C_{p,q'}[\rho]=C_{p,q}[\rho]C_{q,q'}[\rho]>C_{p,q}[\rho]$. So taking $q'\to\infty$ one obtains Eq. \eqref{ineq:C}. On the other hand, as is claimed in property \ref{minimalbound}, if $\rho\in\Xi$ then $C_{p,q}[\rho]=1, \forall p<q$.
\end{proof}
\end{proposition}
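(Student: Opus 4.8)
The plan is to prove Proposition \ref{prop2} in two natural regimes, treating the non-uniform case and the uniform case separately, exactly as the statement splits them. For $\rho\notin\Xi$, the key structural fact is the multiplicativity of the LMC-R\'enyi complexity under composition of the exponents: from the definition \eqref{def:LMC} one has, for any $p<q<q'$,
\begin{equation*}
C_{p,q'}[\rho]=e^{R_p[\rho]-R_{q'}[\rho]}=e^{R_p[\rho]-R_q[\rho]}\,e^{R_q[\rho]-R_{q'}[\rho]}=C_{p,q}[\rho]\,C_{q,q'}[\rho].
\end{equation*}
Since $q<q'$, Property \ref{minimalbound} (with the roles of $p,q$ played by $q,q'$) gives $C_{q,q'}[\rho]\ge 1$, and for $\rho\notin\Xi$ the equality in Property \ref{minimalbound} fails, so $C_{q,q'}[\rho]>1$. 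Hence $C_{p,q'}[\rho]>C_{p,q}[\rho]$, i.e. $C_{p,q}[\rho]$ is strictly increasing in the upper exponent $q$. Combined with the lower bound $C_{p,q}[\rho]>1$ (again Property \ref{minimalbound}, strict because $\rho\notin\Xi$), this already yields the first two inequalities in \eqref{ineq:C}.

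Next I would pass to the limit $q'\to\infty$. Because $q\mapsto C_{p,q}[\rho]$ is increasing and we want $C_{p,\infty}[\rho]:=\lim_{q'\to\infty}C_{p,q'}[\rho]$, it suffices to identify $\lim_{q'\to\infty}R_{q'}[\rho]$. The standard fact is $\lim_{q'\to\infty}R_{q'}[\rho]=R_\infty[\rho]=-\log\rho_{\max}$ (used already in the proof of Proposition \ref{prop1}), so that
\begin{equation*}
C_{p,\infty}[\rho]=e^{R_p[\rho]-R_\infty[\rho]}=\rho_{\max}\,e^{R_p[\rho]}.
\end{equation*}
It then remains to rewrite $e^{R_p[\rho]}$ in the bracket notation of the statement: from \eqref{Renyi}, $e^{R_p[\rho]}=\big(\int_\mathbb{R}\rho^p\big)^{1/(1-p)}=\big(\int_\mathbb{R}\rho\cdot\rho^{p-1}\big)^{-1/(p-1)}=\langle\rho^{p-1}\rangle^{-1/(p-1)}$, where $\langle\,\cdot\,\rangle$ denotes the average with respect to $\rho$. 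This gives the closed form $C_{p,\infty}[\rho]=\rho_{\max}/\langle\rho^{p-1}\rangle^{1/(p-1)}$, and since each $C_{p,q'}[\rho]$ with finite $q'$ is strictly below this supremum, the strict inequality $C_{p,q}[\rho]<C_{p,\infty}[\rho]$ follows.

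For $\rho\in\Xi$, the computation is immediate: a uniform density $\chi^{(a)}$ has $R_s[\chi^{(a)}]=\log a$ independently of $s$ (including $s=\infty$, where $\rho_{\max}=a^{-1}$), so $C_{p,q}[\rho]=e^{R_p-R_q}=1$ for every $p<q$ and likewise $C_{p,\infty}[\rho]=1$; translation invariance of the R\'enyi entropies handles the shift $x_0$. I do not anticipate a serious obstacle here — the argument is essentially a monotonicity-and-limit chase — but the one point demanding a little care is the interchange of limit and the $q'\to\infty$ behavior of $R_{q'}$, i.e. justifying $C_{p,\infty}[\rho]$ as an honest limit (equivalently, that $R_{q'}[\rho]\downarrow R_\infty[\rho]$) rather than merely a formal endpoint; for bounded densities this is exactly the ingredient already invoked in Proposition \ref{prop1}, so it can be cited there. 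A secondary subtlety is that the displayed formula implicitly requires $W_p[\rho]<\infty$ so that $R_p[\rho]$ is finite; for $p>1$ this holds whenever $\rho$ is bounded (Proposition \ref{prop1} with $q_c<1$), and for $p\le 1$ one may restrict attention to densities for which $R_p$ is defined, which is the tacit standing assumption throughout this section.
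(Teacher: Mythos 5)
Your proposal is correct and follows essentially the same route as the paper: the multiplicative decomposition $C_{p,q'}[\rho]=C_{p,q}[\rho]\,C_{q,q'}[\rho]$ together with the strict lower bound of Property \ref{minimalbound} for $\rho\notin\Xi$, followed by the limit $q'\to\infty$. The only difference is that you spell out the identification $C_{p,\infty}[\rho]=\rho_{\max}/\langle\rho^{p-1}\rangle^{1/(p-1)}$ via $R_\infty[\rho]=-\log\rho_{\max}$, which the paper leaves implicit.
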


Let us now introduce the notion of \textit{entropic cumulant} of order $n$ of a probability density $\rho$. Note that these entropic cumulants have the same structure than the ordinary cumulants $k_n=\left.\frac{d^n\log\langle e^{px}\rangle}{dp^n}\right|_{p=0}$.
\begin{definition}\label{def:cum}
	Let $n\in\mathbb N$ and $\rho\in\mathcal D(\mathbb R)$, so  the entropic cumulant of order $n$, $\mathfrak K_n[\rho]$, is defined as 
\begin{equation}
	\mathfrak K_n[\rho]=\left.\frac{d^n \log \langle \rho^{q-1}\rangle}{dq^n}\right|_{q=1}.
	\end{equation}
	  Particularly, 
\begin{eqnarray*} 
	\mathfrak K_0[\rho]&=& 0, \\ \mathfrak K_1[\rho]&=&-S[\rho]=\langle\log\rho\rangle\\
	\mathfrak K_2[\rho]&=&\langle\log^2\rho\rangle-\langle\log\rho\rangle^2,\\ \mathfrak K_3[\rho]&=&\langle\log^3\rho\rangle-3\langle\log^2\rho\rangle\langle\log\rho\rangle+2\langle\log\rho\rangle^3 \\
	\cdots
\end{eqnarray*}
\end{definition}
It is worth to mention that, when $\rho$ is an uniform density, then $\mathfrak K_n[\rho]=0,\quad \forall n>1$; this behavior is similar to the Gaussian probability density with respect to the ordinary cumulants.\\
This definition and the next property will be useful in the following propositions \ref{prop3}, \ref{prop6} and  \ref{prop9}.
\begin{property}\label{prop:cumulants}
	Given any probability density $\rho$ and any $\alpha\in\mathbb R$ then 
	\begin{equation}
	\mathfrak K_n[\rho_\alpha]=\alpha^n\,\mathfrak K_n[\rho],\quad n\in\mathbb N
	\end{equation}
\end{property}
and then it follows that $\frac{\mathfrak K_{n+1}[\rho_\alpha]}{\mathfrak K_{n}[\rho_\alpha]}=\alpha\frac{\mathfrak K_{n+1}[\rho]}{\mathfrak K_{n}[\rho]}$ . 
	\begin{proof}
		From definition \ref{def:cum} one has that
		$$\mathfrak K_{n}[\rho_\alpha]=\left.\frac{d^n}{dq^n} \log W_q[\rho_\alpha]\right|_{q=1}=\left.\frac{d^n}{dq^n} \log W_{q_\alpha}[\rho]\right|_{q=1}=\alpha^n\left.\frac{d^n}{dq^n} \log W_{q}[\rho]\right|_{q=1}=\alpha^n \mathfrak K_{n}[\rho],$$
		where we have used property \ref{LP_prop} and taken into account that, for all $\alpha\neq0$ then $q=1\Longleftrightarrow q_\alpha=1$.
	\end{proof}

Finally, the third proposition is achieved through the Taylor series of the Rényi entropy $R_q[\rho]$ on its entropic parameter around $q=1$.

\begin{proposition}\label{prop3}
Given any probability density $\rho$, then the associated LMC-Rényi measure can be formally expressed as
\begin{eqnarray*}
C_{p,q}[\rho]&=&e^{\frac{\mathfrak K_{2}[\rho]}{2}(q-p)}\,\prod_{n=2}^\infty e^{\,\frac{\mathfrak K_{n+1}[\rho]}{(n+1)!}\left[(q-1)^{n}-(p-1)^{n}\right]},
\end{eqnarray*}
provided that the series is convergent. 

 \begin{proof}Let us consider the Taylor series of $\log W_q[\rho]$ around $q=1$, 
\begin{equation*}
R_{q}[\rho]=\frac{1}{1-q} \sum_{n=0}^\infty \left.\frac{d^n\,\left(\log\left\langle\rho^{q-1}\right \rangle\right)}{dq^n}\right|_{q=1} \frac{(q-1)^n}{n!}=-\sum_{n=0}^\infty \mathfrak K_n[\rho] \frac{(q-1)^{n-1}}{n!},
\end{equation*}
provided that the series is convergent. 
So, taking into account that $\mathfrak K_0=0$ one can write $ R_q[\rho]=-\mathfrak K_1-\sum_{n=1}^\infty \mathfrak K_{n+1}[\rho] \frac{(q-1)^{n}}{(n+1)!}$. So the LMC-Rényi complexity measure can be expressed as 
\begin{eqnarray*}
C_{p,q}[\rho]=e^{R_{p}[\rho]-R_{q}[\rho]}=e^{\sum_{n=1}^\infty \frac{\mathfrak K_{n+1}[\rho] }{(n+1)!}[(q-1)^{n}-(p-1)^{n}]}.
\end{eqnarray*}
\end{proof}
Particularly, for the conventional LMC complexity measure one has that
\begin{equation*}
C_{1,2}[\rho]\equiv C_{LMC}[\rho]=e^{\frac{\mathfrak K_{2}[\rho]}{2}}\,e^{\frac{\mathfrak K_{3}[\rho]}{3!}}\,e^{\frac{\mathfrak K_{4}[\rho]}{4!}}\,\cdots
\end{equation*}
\end{proposition}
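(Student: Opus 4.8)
The plan is to obtain the claimed product by Taylor-expanding the entropic-moment generating function $\log W_q[\rho]=\log\langle\rho^{q-1}\rangle$ around $q=1$, dividing by $1-q$ to recover $R_q[\rho]$ via \eqref{Renyi}, and then forming the difference $R_p[\rho]-R_q[\rho]$ that sits in the exponent of $C_{p,q}[\rho]$ in \eqref{def:LMC}.

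First I would write, by Definition~\ref{def:cum}, the Taylor expansion $\log\langle\rho^{q-1}\rangle=\sum_{n=0}^\infty \mathfrak K_n[\rho]\,(q-1)^n/n!$, valid (by hypothesis) in a neighbourhood of $q=1$. Since $\mathfrak K_0[\rho]=0$ the $n=0$ term drops out, so the apparent pole of $\frac1{1-q}\log W_q[\rho]$ at $q=1$ is removable and \eqref{Renyi} gives $R_q[\rho]=-\sum_{n=1}^\infty \mathfrak K_n[\rho]\,(q-1)^{n-1}/n!$. Isolating the $n=1$ term, which equals $-\mathfrak K_1[\rho]=S[\rho]$ (cf. \eqref{Shannon} and Definition~\ref{def:cum}), and reindexing the rest, I get $R_q[\rho]=-\mathfrak K_1[\rho]-\sum_{n=1}^\infty \mathfrak K_{n+1}[\rho]\,(q-1)^{n}/(n+1)!$.

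Next I would substitute this into $C_{p,q}[\rho]=e^{R_p[\rho]-R_q[\rho]}$. The $\mathfrak K_1[\rho]$ terms cancel in the difference, leaving $R_p[\rho]-R_q[\rho]=\sum_{n=1}^\infty \frac{\mathfrak K_{n+1}[\rho]}{(n+1)!}\big[(q-1)^{n}-(p-1)^{n}\big]$. Peeling off the $n=1$ term, where $(q-1)-(p-1)=q-p$, produces the prefactor $e^{\frac{\mathfrak K_2[\rho]}{2}(q-p)}$, and rewriting $e^{\sum_{n\ge2}}=\prod_{n\ge2}e$ yields exactly the asserted product. The conventional-LMC formula then follows by setting $(p,q)=(1,2)$, since then $(p-1)^n=0$, $(q-1)^n=1$ and $q-p=1$.

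The step I expect to be the main obstacle is the analytic bookkeeping rather than the algebra: one needs $q\mapsto\langle\rho^{q-1}\rangle$ to be real-analytic near $q=1$ so the Taylor series is legitimate, the termwise division by $1-q$ and the reindexing to be valid inside the radius of convergence, and enough absolute convergence of $\sum_n \frac{\mathfrak K_{n+1}[\rho]}{(n+1)!}(q-1)^n$ to regroup the exponential of a sum into a product of exponentials. Because the statement is asserted only formally and "provided that the series is convergent", these issues are absorbed into that hypothesis, so no sharp conditions on $\rho$ need be established; the remaining manipulations (cancellation of $\mathfrak K_1[\rho]$, extraction of the $\mathfrak K_2[\rho]$ prefactor) are routine.
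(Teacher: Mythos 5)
Your proposal is correct and follows essentially the same route as the paper: Taylor-expand $\log\langle\rho^{q-1}\rangle$ about $q=1$, use $\mathfrak K_0[\rho]=0$ to remove the apparent pole, reindex so that $R_q[\rho]=-\mathfrak K_1[\rho]-\sum_{n\ge1}\mathfrak K_{n+1}[\rho](q-1)^n/(n+1)!$, and let the $\mathfrak K_1$ terms cancel in $R_p-R_q$ before splitting off the $n=1$ factor. Your explicit remarks on where analyticity and absolute convergence are needed are a sound reading of the ``provided that the series is convergent'' hypothesis, which the paper likewise invokes without further justification.
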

It is specially interesting that, for $p,q\sim1$ we can write
\begin{eqnarray}\label{eq:Co0}
C_{p,q}[\rho]&\simeq&e^{\frac{\mathfrak K_{2}[\rho]}{2}(q-p)}.
\end{eqnarray}
Moreover, when $\rho\in\Xi,$ then $\mathfrak K_2[\rho]=0$,  and so $C_{p,q}[\rho]=e^{\frac{\mathfrak K_{2}[\rho]}{2}(q-p)}=1,\quad \forall p<q.$

\subsection*{Low complexity}

 Given any probability density $\rho$ (with $C_{p,q}[\rho]<\infty$), and choosing a real number  $\alpha\simeq0$, then following the Theorem \ref{theorem} one can always consider that $\rho_\alpha$ is a \textit{low complexity density }(in the LMC-sense). First, we note that when $\alpha\to0$, Eq. \eqref{eq:lambdac} diverges, so 
 \begin{proposition}\label{prop4}
 Let $\rho(x)$ a bounded and \textit{low complexity} density, then the critical entropic parameter $q_c[\rho]<<0$. 
\begin{proof}
		Taking $\alpha\to0$ in Eq. \eqref{eq:lambdac} one obtains $q_c[\rho]\to-\infty$. 
\end{proof}
\end{proposition}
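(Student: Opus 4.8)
The plan is to read the conclusion off directly from the rescaling rule \eqref{eq:lambdac} for the critical entropic parameter, combined with the characterization of low-complexity densities furnished by Theorem \ref{theorem}. Recall that, by the monotonicity of $C_{p,q}$ along a differential-escort orbit, a density is driven arbitrarily close to the minimal-complexity regime precisely by applying $\mathfrak E_\alpha$ with $\alpha$ near $0$; hence a \emph{low complexity} density is to be understood as $\rho_\alpha$ for some fixed bounded $\rho$ (with finite LMC-R\'enyi complexity, as assumed at the start of the subsection) and some $\alpha\simeq 0$, the case $\alpha<0$ being entirely symmetric.

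First I would invoke Proposition \ref{prop1}: since the underlying density $\rho$ is bounded, its critical parameter satisfies $q_c[\rho]<1$, so that $1-q_c[\rho]>0$. Then I would substitute this into \eqref{eq:lambdac}, namely $q_c[\rho_\alpha]=1-\dfrac{1-q_c[\rho]}{\alpha}$, and note that the numerator $1-q_c[\rho]$ is a strictly positive constant independent of $\alpha$. Letting $\alpha\to 0^{+}$ therefore forces the fraction $\dfrac{1-q_c[\rho]}{\alpha}$ to diverge to $+\infty$, whence $q_c[\rho_\alpha]\to -\infty$; for $\alpha$ sufficiently small one obtains $q_c[\rho_\alpha]\ll 0$, which, after renaming $\rho_\alpha$ as the low-complexity density under consideration, is exactly the asserted statement.

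The only point requiring care — and the step I would watch most closely — is the internal consistency of the hypotheses: one must make sure that ``bounded and low complexity'' is a coherent requirement, i.e.\ that the object to which \eqref{eq:lambdac} is applied really arises as a differential-escort transform of a bounded density with finite complexity, and that the transformation law \eqref{eq:lambdac} is legitimately available in the relevant range of $\alpha$. Once that bookkeeping is in place the argument collapses to a one-line limit; there is no genuine analytic obstacle, the whole content being carried by Proposition \ref{prop1} together with the already-established identity \eqref{eq:lambdac}.
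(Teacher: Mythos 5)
Your proposal is correct and follows essentially the same route as the paper: the paper's own proof is the one-line observation that letting $\alpha\to 0$ in Eq.~\eqref{eq:lambdac} sends $q_c[\rho_\alpha]\to-\infty$. Your only addition is to make explicit, via Proposition~\ref{prop1}, that the numerator $1-q_c[\rho]$ is strictly positive for a bounded density, which the paper leaves implicit but which is indeed the detail needed to fix the sign of the divergence.
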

  
 On the other hand, the upper bound of the LMC-Rényi measure of a \textit{low complexity density} goes to the unity, in such way that Eq \eqref{ineq:C} is crushed. 

 \begin{equation}
  1< C_{p,q}[\rho_\alpha]<C_{p_\alpha,\infty}[\rho]^\alpha,\quad \alpha\simeq0.
  \end{equation}
So, it follows that
 \begin{proposition}\label{prop5}
For a \textit{low complexity} density $\rho$, one has that 
  \begin{equation}
   1< C_{p,q}[\rho]<C_{p,\infty}[\rho],
   \end{equation}
 but, $C_{p,\infty}[\rho]\simeq1.$
\begin{proof}
Given a positive $\alpha\simeq0$, and a probability density $\rho$ such that $C_{p_\alpha,\infty}[\rho]<\infty$, then due to property \eqref{property7} one has that  $C_{p,\infty}[\rho_\alpha]=(C_{p_\alpha,\infty}[\rho])^\alpha\simeq 1$
\end{proof}
  \end{proposition}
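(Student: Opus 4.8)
The plan is to reduce the statement to two facts that are already in place: the two-sided bound of Proposition \ref{prop2}, and the complexity transformation law of Property \ref{property7}. The inequality chain $1<C_{p,q}[\rho]<C_{p,\infty}[\rho]$ is, for any $\rho\notin\Xi$, precisely Proposition \ref{prop2}, so the only genuinely new content is the estimate $C_{p,\infty}[\rho]\simeq1$ for a \emph{low complexity} density. By the definition used in this section, such a density is of the form $\rho_\alpha=\mathfrak E_\alpha[\rho]$ with a small positive $\alpha$ (the symbol $\rho$ now standing for the underlying density, assumed bounded and with $C_{p,q}[\rho]<\infty$); I would then prove that $C_{p,\infty}[\rho_\alpha]\to1$ as $\alpha\to0^+$.

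First I would pass to the limit $q\to\infty$ in Property \ref{property7}. Since $\alpha>0$, the rescaled order $q_\alpha=1+\alpha(q-1)$ tends to $+\infty$ as $q\to\infty$, and $R_{q_\alpha}[\rho]\to R_\infty[\rho]=-\log\rho_{max}$ by monotonicity and continuity of the Rényi entropy in its order; equivalently one invokes Property \ref{entr-prop} at the endpoint, $R_\infty[\rho_\alpha]=\alpha R_\infty[\rho]$, which is immediate from $(\rho_\alpha)_{max}=\rho_{max}^\alpha$. Either way this gives
\begin{equation*}
C_{p,\infty}[\rho_\alpha]=e^{R_p[\rho_\alpha]-R_\infty[\rho_\alpha]}=e^{\alpha\left(R_{p_\alpha}[\rho]-R_\infty[\rho]\right)}=\left(C_{p_\alpha,\infty}[\rho]\right)^\alpha,
\end{equation*}
valid whenever the right-hand side is finite, i.e. whenever $\rho$ is bounded and $W_{p_\alpha}[\rho]<\infty$.

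Then I would let $\alpha\to0^+$. Here $p_\alpha=1+\alpha(p-1)\to1$, so $C_{p_\alpha,\infty}[\rho]\to C_{1,\infty}[\rho]=\rho_{max}\,e^{S[\rho]}$, a fixed finite number under the standing regularity hypotheses. Hence $\alpha\log C_{p_\alpha,\infty}[\rho]\to0$, that is $\left(C_{p_\alpha,\infty}[\rho]\right)^\alpha\to1$, which is exactly the asserted $C_{p,\infty}[\rho_\alpha]\simeq1$; together with Proposition \ref{prop2} this completes the argument. A parallel route, worth recording, is to combine Property \ref{prop:cumulants} with Proposition \ref{prop3}: since $\mathfrak K_n[\rho_\alpha]=\alpha^n\mathfrak K_n[\rho]$, all entropic cumulants of $\rho_\alpha$ vanish as $\alpha\to0$, forcing $C_{p,q}[\rho_\alpha]\to1$ for every finite $p<q$.

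The main obstacle is the limit step: one must ensure that $C_{p_\alpha,\infty}[\rho]$ does not grow faster than $e^{c/\alpha}$ as $\alpha\to0^+$, for otherwise $\left(C_{p_\alpha,\infty}[\rho]\right)^\alpha$ need not tend to $1$. This is exactly where boundedness of $\rho$ and finiteness of $S[\rho]$ are used, guaranteeing the finite limit $C_{p_\alpha,\infty}[\rho]\to C_{1,\infty}[\rho]$. Two minor points: the informal symbol $\simeq$ should be read as this $\alpha\to0^+$ limit, and the interchange of $q\to\infty$ with Property \ref{property7} rests on the continuity of $q\mapsto R_q[\rho]$ at $q=\infty$.
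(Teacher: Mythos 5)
Your proof is correct and follows essentially the same route as the paper: both rest on the transformation law $C_{p,\infty}[\rho_\alpha]=\left(C_{p_\alpha,\infty}[\rho]\right)^\alpha$ obtained from Property \ref{property7} in the limit $q\to\infty$, combined with $\alpha\simeq0$. You additionally make explicit two points the paper leaves implicit --- that $R_\infty[\rho_\alpha]=\alpha R_\infty[\rho]$ justifies the endpoint case of the transformation law, and that $C_{p_\alpha,\infty}[\rho]\to C_{1,\infty}[\rho]=\rho_{max}\,e^{S[\rho]}$ stays bounded as $\alpha\to0^+$ so that the $\alpha$-th power genuinely tends to $1$ --- which strengthens rather than alters the argument.
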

  
 Finally, taking the Taylor series of $R_q[\rho_\alpha]$ around $\alpha=0$ one obtains 
$ C_{p,q}[\rho_\alpha]\sim e^{\alpha^2\mathfrak K_2[\rho]\,\frac{(q-p)}{2}}$ , but just taking into account property \ref{prop:cumulants}, then  $e^{\alpha^2\mathfrak K_2[\rho]\,\frac{(q-p)}{2}}=e^{\mathfrak K_2[\rho_\alpha]\,\frac{(q-p)}{2}}.$  That is to say,  we can assure that
\begin{proposition}\label{prop6}
If $\rho$ is a low complexity density, then for any fixed $p<q<<\infty$
\begin{equation}
C_{p,q}[\rho]\simeq e^{\frac{\mathfrak K_2[\rho]}2\,(q-p)}.
\end{equation}
\begin{proof}
Given a positive $\alpha\simeq0$, from proposition \ref{prop3} and property \ref{prop:cumulants} one has that
\begin{eqnarray*}
	C_{p,q}[\rho_\alpha]&=&e^{\frac{\alpha^2\mathfrak K_{2}[\rho]}{2}(q-p)}\,\prod_{n=2}^\infty e^{\alpha^{n+1}\,\frac{\mathfrak K_{n+1}[\rho]}{(n+1)!}\left[(q-1)^{n}-(p-1)^{n}\right]}\simeq e^{\frac{\alpha^2\mathfrak K_{2}[\rho]}{2}(q-p)}=e^{\frac{\mathfrak K_{2}[\rho_\alpha]}{2}(q-p)}.
\end{eqnarray*}
\end{proof}
\end{proposition}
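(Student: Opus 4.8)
The plan is to realise the given low-complexity density as a differential-escort image of a fixed density with a small parameter, and then read the statement off from the product expansion of Proposition~\ref{prop3} combined with the cumulant scaling of Property~\ref{prop:cumulants}. First I would write the low-complexity density $\rho$ in the form $\rho=\mathfrak E_\alpha[\sigma]=\sigma_\alpha$ for some fixed probability density $\sigma$ and a parameter $\alpha\simeq 0$; this is legitimate because, by the composition law (Property~\ref{compo-prop}), $\mathfrak E_\alpha$ is invertible for $\alpha\neq 0$, so one may simply take $\sigma=\mathfrak E_{\alpha^{-1}}[\rho]$, and Theorem~\ref{theorem} guarantees that letting $\alpha\to 0$ does push the complexity toward its minimum, which is precisely the meaning of ``$\rho$ low complexity''. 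Throughout, $p<q$ are held fixed and, as the statement demands, bounded away from $+\infty$.

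Next I would apply the exact product formula of Proposition~\ref{prop3} to $\sigma_\alpha$ and immediately substitute $\mathfrak K_{n}[\sigma_\alpha]=\alpha^{n}\mathfrak K_{n}[\sigma]$ from Property~\ref{prop:cumulants}, obtaining
\[
C_{p,q}[\rho]=e^{\frac{\alpha^{2}\mathfrak K_{2}[\sigma]}{2}(q-p)}\prod_{n=2}^{\infty}e^{\,\alpha^{n+1}\frac{\mathfrak K_{n+1}[\sigma]}{(n+1)!}\left[(q-1)^{n}-(p-1)^{n}\right]}.
\]
The exponent is now a power series in $\alpha$ whose lowest-order term is $\tfrac{\alpha^{2}}{2}\mathfrak K_{2}[\sigma](q-p)$, every remaining contribution carrying a factor $\alpha^{n+1}$ with $n\ge 2$, i.e. order $\alpha^{3}$ or higher. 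Hence, as $\alpha\to 0$, the infinite product tends to $1$ and $\log C_{p,q}[\rho]=\tfrac{\alpha^{2}}{2}\mathfrak K_{2}[\sigma](q-p)+O(\alpha^{3})=\tfrac12\mathfrak K_{2}[\rho](q-p)+o(\alpha^{2})$, where in the last step I used $\mathfrak K_{2}[\rho]=\mathfrak K_{2}[\sigma_\alpha]=\alpha^{2}\mathfrak K_{2}[\sigma]$ again. This is exactly the claimed $C_{p,q}[\rho]\simeq e^{\mathfrak K_{2}[\rho](q-p)/2}$; note it has the same shape as the small-$(p,q)$ expansion \eqref{eq:Co0}, except that here the smallness is supplied by $\alpha$ rather than by $q-1$ and $p-1$.

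The step I expect to be the main obstacle is justifying the interchange of the limit $\alpha\to 0$ with the infinite product, i.e. showing that the tail $\sum_{n\ge 2}\alpha^{n+1}\mathfrak K_{n+1}[\sigma]\frac{(q-1)^{n}-(p-1)^{n}}{(n+1)!}$ is genuinely $O(\alpha^{3})$ as a whole and not merely termwise. This is where the hypothesis $q\ll\infty$ is needed, together with an implicit analyticity assumption already present in Proposition~\ref{prop3}: the series $\sum_n \mathfrak K_n[\sigma](q-1)^{n-1}/n!$ must converge, equivalently $q\mapsto R_q[\sigma]$ is analytic in a neighbourhood of $q=1$ of some radius $r>0$. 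Bounding $|\mathfrak K_{n+1}[\sigma]|$ by the growth rate dictated by $r$ and keeping $|q-1|,|p-1|<r$, one sees that for $\alpha$ small enough the tail is a convergent series in $\alpha$ with neither a constant nor a linear term, hence dominated by a constant multiple of $\alpha^{3}$, which makes the ``$\simeq$'' precise (for instance as $\lim_{\alpha\to 0}\alpha^{-2}\log C_{p,q}[\sigma_\alpha]=\tfrac12\mathfrak K_2[\sigma](q-p)$). Everything else — the differentiation of $\log W_q$, the identification of the $\mathfrak K_n$, and the exact product formula itself — is already delivered by Proposition~\ref{prop3} and Property~\ref{prop:cumulants} and only needs to be quoted.
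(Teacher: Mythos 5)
Your proposal is correct and follows essentially the same route as the paper: identify the low-complexity density as $\mathfrak E_\alpha$ of a fixed density with $\alpha\simeq 0$, apply the product expansion of Proposition~\ref{prop3} together with the cumulant rescaling $\mathfrak K_n[\rho_\alpha]=\alpha^n\mathfrak K_n[\rho]$, and discard the $O(\alpha^3)$ tail. The only difference is that you make explicit the uniform (not merely termwise) control of that tail via the analyticity of $q\mapsto R_q$ near $q=1$, a point the paper's proof leaves implicit in the ``$\simeq$''.
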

In fact, note that taking into account property \ref{prop:cumulants}, the lowest \textit{entropic cumulants} $\mathfrak K_n[\rho]$ domain for the \textit{low complexity} densities. Moreover, in these cases one typically has that $|\mathfrak K_{n+1}[\rho]|<|\mathfrak K_n[\rho]|.$
%
%
\subsection*{High complexity}
In order to explore the high complexity limit, one can take any non-uniform probability density $\rho$, and a very large $\alpha>>1$. So,  following Theorem \ref{theorem} one can claim that $\rho_\alpha$ is a \textit{high complexity} density.

 First of all, note that the critical entropic parameter $q_c[\rho]$ of a high complexity density is closed to one \eqref{eq:lambdac}, that is to say

\begin{proposition}\label{prop7}
 Let $\rho(x)$ a bounded and \textit{high complexity} density, then the critical entropic parameter $q_c[\rho]\lesssim1$. 

 \begin{proof}
 	Taking $\alpha>>1$ in Eq. \eqref{eq:lambdac} one obtains $q_c[\rho]\to1$. 
\end{proof}
  \end{proposition}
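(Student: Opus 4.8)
The plan is to establish Proposition \ref{prop7} by combining the rescaling identity \eqref{eq:lambdac} with the characterization of high complexity densities as differential-escort images $\rho_\alpha$ for $\alpha \gg 1$, exactly as in the proof sketch already given. First I would recall that, by Theorem \ref{theorem} together with Property \ref{prop:monot}, any non-uniform bounded density can be viewed (up to relabelling) as $\sigma_\alpha$ for some bounded density $\sigma$ and some large $\alpha > 1$, since the LMC-R\'enyi complexity grows monotonically with $\alpha$ on $(0,\infty)$ and tends to its supremum as $\alpha \to +\infty$. Then I would apply Eq. \eqref{eq:lambdac}, which states $q_c[\sigma_\alpha] = 1 - \frac{1 - q_c[\sigma]}{\alpha}$, and observe that $\sigma$ being bounded gives $q_c[\sigma] < 1$ by Proposition \ref{prop1}, so that $1 - q_c[\sigma]$ is a fixed positive constant.

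The key estimate is then immediate: as $\alpha \to +\infty$, the term $\frac{1 - q_c[\sigma]}{\alpha} \to 0^+$, hence $q_c[\sigma_\alpha] \to 1^-$, which is precisely the assertion $q_c[\rho] \lesssim 1$ for the high complexity density $\rho = \sigma_\alpha$. I would also note for completeness that $q_c[\sigma_\alpha]$ stays strictly below $1$ for every finite $\alpha$, so the approach to $1$ is genuinely one-sided, consistent with the $\lesssim$ notation; this mirrors the behaviour recorded after Eq. \eqref{eq:lambdac} for the intermediate value $\alpha_c = 1 - q_c[\sigma]$.

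The only subtlety — and the one place a careful reader might object — is the passage from "$\rho$ is a high complexity density" to "$\rho$ can be written as $\sigma_\alpha$ with $\alpha \gg 1$ and $\sigma$ bounded." This is not a theorem with a precise quantitative statement but rather the working definition of the high complexity regime adopted in this section (as in the analogous treatment of the low complexity case preceding Proposition \ref{prop4}): one fixes an arbitrary non-uniform bounded $\sigma$ and takes $\alpha$ large, and the boundedness of $\sigma_\alpha$ for $\alpha \ge 1$ follows since raising a bounded density to the power $\alpha \ge 1$ keeps it bounded (its maximum is $\sigma_{\max}^\alpha$). So the main obstacle is essentially bookkeeping about the scope of the word "high complexity"; once that is pinned down to the constructive meaning used throughout the section, the proof is the single limit computation above. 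I would therefore keep the proof as short as the author did, merely spelling out that $q_c[\sigma] < 1$ is fixed and finite so that $\alpha \to \infty$ forces $q_c[\sigma_\alpha] \to 1$.
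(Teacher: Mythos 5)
Your proof is correct and follows essentially the same route as the paper: apply Eq.~\eqref{eq:lambdac} with $\alpha\gg1$ and conclude $q_c[\rho_\alpha]\to1^-$. You merely make explicit what the paper leaves implicit, namely that ``high complexity'' here means $\rho=\sigma_\alpha$ for large $\alpha$ and that $q_c[\sigma]<1$ is fixed (and, implicitly, finite) so the correction term $\frac{1-q_c[\sigma]}{\alpha}$ vanishes from above.
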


On the other hand, the inequality \eqref{ineq:C} losses the upper bound. So we can assure that
 \begin{proposition}\label{prop8}
For a \textit{high complexity} density $\rho$ one has that 
  \begin{equation}
   1< C_{p,q}[\rho]<C_{p,\infty}[\rho],
   \end{equation}
 but, $C_{p,\infty}[\rho]>>1,$ for any fixed $p<<\infty$. 

 \begin{proof}
Given a non uniform probability density $\rho$, with $C_{p,q}[\rho]>1$, then  one obtains $C_{p,\infty}[\rho_\alpha]=(C_{p_\alpha,\infty}[\rho])^\alpha\to\infty$ when $\alpha\to\infty$. 	
\end{proof}
   \end{proposition}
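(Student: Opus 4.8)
The left and middle inequalities $1<C_{p,q}[\rho]<C_{p,\infty}[\rho]$ are nothing but Proposition~\ref{prop2}: a high-complexity density is by construction of the form $\mathfrak E_\alpha[\sigma]$ with $\sigma$ non-uniform and $\alpha\neq 0$, hence it lies outside $\Xi$ by Property~\ref{uniforms}, so Proposition~\ref{prop2} applies verbatim (the middle inequality coming from sending the upper index to $\infty$). The only real content is therefore the quantitative claim $C_{p,\infty}[\rho]\gg 1$. Concretely, writing the high-complexity density as $\sigma_\alpha=\mathfrak E_\alpha[\sigma]$ for a fixed non-uniform seed $\sigma$ and a large parameter $\alpha$, the plan is to show that $C_{p,\infty}[\sigma_\alpha]\to\infty$ as $\alpha\to\infty$, for every fixed finite $p$.

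The first step pushes the complexity down onto the seed. Applying Property~\ref{property7} and letting $q\to\infty$ --- legitimate since $\alpha>0$ forces $q_\alpha=1+\alpha(q-1)\to\infty$ too --- one gets
\begin{equation*}
C_{p,\infty}[\sigma_\alpha]=\bigl(C_{p_\alpha,\infty}[\sigma]\bigr)^{\alpha}=\exp\bigl\{\alpha\,(R_{p_\alpha}[\sigma]-R_\infty[\sigma])\bigr\},
\end{equation*}
the second form following from Property~\ref{entr-prop}. By Proposition~\ref{prop2} the base $C_{p_\alpha,\infty}[\sigma]$ already exceeds $1$ for every finite $p$ (so that $p_\alpha$ is finite), hence $C_{p,\infty}[\sigma_\alpha]>1$. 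What is not automatic --- and is the crux --- is that the $\alpha$-th power of that base diverges, because the base itself moves with $\alpha$ through $p_\alpha=1+\alpha(p-1)$.

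To handle this I would split on the sign of $p-1$. If $p\le 1$, then $p_\alpha\le 1$ once $\alpha\ge 1$, so by monotonicity of $R_q[\sigma]$ in $q$ (already invoked in Propositions~\ref{prop1} and~\ref{prop2}) one has $R_{p_\alpha}[\sigma]\ge R_1[\sigma]=S[\sigma]$, whence $C_{p_\alpha,\infty}[\sigma]\ge C_{1,\infty}[\sigma]=\sigma_{max}\,e^{S[\sigma]}$, a fixed constant strictly above $1$ for a non-uniform bounded $\sigma$; therefore $C_{p,\infty}[\sigma_\alpha]\ge\bigl(\sigma_{max}\,e^{S[\sigma]}\bigr)^{\alpha}\to\infty$ with no extra hypothesis. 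If $p>1$, the base instead tends to $1$, and one must exploit the cancellation of the explicit $\alpha$ against $1-p_\alpha=-\alpha(p-1)$ inside the Rényi difference: using $R_q[\sigma]-R_\infty[\sigma]=\frac{1}{1-q}\log\bigl(W_q[\sigma]\,\sigma_{max}^{1-q}\bigr)$ one gets
\begin{equation*}
C_{p,\infty}[\sigma_\alpha]=\exp\Bigl\{\frac{-1}{p-1}\,\log\!\bigl(\sigma_{max}\textstyle\int(\sigma/\sigma_{max})^{p_\alpha}\,dx\bigr)\Bigr\},
\end{equation*}
and since $p_\alpha\to\infty$ the integral tends to $0$ --- provided $\sigma$ does not attain its maximum on a set of positive measure --- so the exponent blows up and again $C_{p,\infty}[\sigma_\alpha]\to\infty$. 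The main obstacle is precisely this $p>1$ case: there the naive ``a number bigger than $1$ raised to the power $\alpha$'' argument fails because the base collapses to $1$, so the divergence must come entirely from the displayed cancellation, which in turn relies on the mild, generic non-degeneracy of $\sigma$ (equivalently, $R_q[\sigma]-R_\infty[\sigma]\to\infty$ as $q\to\infty$).
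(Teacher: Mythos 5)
Your proposal is correct and is in fact considerably more careful than the paper's own argument. The paper's entire proof is the one-line assertion $C_{p,\infty}[\rho_\alpha]=(C_{p_\alpha,\infty}[\rho])^{\alpha}\to\infty$ as $\alpha\to\infty$, which is exactly the ``naive'' step you single out as problematic: since $p_\alpha=1+\alpha(p-1)$ moves with $\alpha$, the base is not a fixed number strictly greater than one. Your split on the sign of $p-1$ is the right repair. For $p\le 1$, monotonicity of $R_q$ in $q$ pins the base above the fixed constant $C_{1,\infty}[\sigma]=\sigma_{max}\,e^{S[\sigma]}>1$, so the $\alpha$-th power diverges; this already covers the standard LMC case $p=1$ and uses nothing beyond Properties \ref{entr-prop} and \ref{property7} and Proposition \ref{prop2}. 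For $p>1$ your identity $\alpha\left(R_{p_\alpha}[\sigma]-R_\infty[\sigma]\right)=\frac{-1}{p-1}\log\left(\sigma_{max}\int(\sigma/\sigma_{max})^{p_\alpha}\,dx\right)$ is correct, and it shows that the divergence is \emph{equivalent} to $\int(\sigma/\sigma_{max})^{P}\,dx\to 0$ as $P\to\infty$, i.e.\ to $\sigma$ attaining its essential maximum only on a null set. The caveat you flag is therefore real and not removable: for a piecewise-constant seed such as the paper's own three-step example one has $\int(\sigma/\sigma_{max})^{P}\,dx\to\mu(\{\sigma=\sigma_{max}\})>0$, so for fixed $p>1$ the quantity $C_{p,\infty}[\sigma_\alpha]$ converges to the finite limit $\left(\sigma_{max}\,\mu(\{\sigma=\sigma_{max}\})\right)^{-1/(p-1)}>1$ rather than diverging. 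In short, your argument proves the proposition unconditionally where it is actually true ($p\le 1$, or $p>1$ with a null maximum set) and makes explicit the extra non-degeneracy hypothesis that the paper's one-line proof silently assumes for fixed $p>1$.
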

Finally, it deserves to note that, although Eq. \ref{eq:Co0} must be valid for values of the parameters $p$ and $q$ enough close to one, for fixed  $p$ and $q$ is possible to find a density enough complex, in such a way that Eq. \ref{eq:Co0} is not satisfied. In fact $C_{p,q}[\rho_\alpha]=C_{p_\alpha,q_\alpha}[\rho]^\alpha\simeq e^{\frac{\mathfrak K_2[\rho]}{2}(q-p)\,\alpha^2}=e^{\frac{\mathfrak K_2[\rho_\alpha]}{2}(q-p)}$, whenever $p_\alpha\simeq1$ and $q_\alpha\simeq1$; that is to say $\alpha(p-1)\simeq0$ and $\alpha(q-1)\simeq0$. 

Moreover, taking into account Eq. \eqref{prop:cumulants}, for a \textit{high complexity }density the highest order entropic cumulants $\mathfrak K_n[\rho]$ will be dominants. 
All these considerations are summarized in the next proposition.
\begin{proposition}\label{prop9}
If $\rho$ is a \textit{high complexity }density, then the domain of parameters $p,q$ for what Eq. \eqref{eq:Co0} remains valid is extremely tiny. In fact, the highest order entropic cumulants $\mathfrak K_n[\rho]$ domain the behavior of the LMC-Rényi complexity measure. Actually, one has that $|\mathfrak K_{n+1}[\rho]|>|\mathfrak K_n[\rho]|$. 

\end{proposition}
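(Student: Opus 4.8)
The plan is to mirror the proof of Proposition~\ref{prop6} for the low-complexity regime, exploiting now the largeness of the deformation parameter rather than its smallness. Following Theorem~\ref{theorem}, I would write the high-complexity density as $\rho_\alpha$ with $\rho$ non-uniform and $\alpha\gg1$; all three assertions are then statements about $\rho_\alpha$. First I would insert the cumulant scaling $\mathfrak K_{n}[\rho_\alpha]=\alpha^{n}\mathfrak K_{n}[\rho]$ of Property~\ref{prop:cumulants} into the exact series of Proposition~\ref{prop3}, which gives
$$C_{p,q}[\rho_\alpha]=e^{\frac{\alpha^{2}\mathfrak K_{2}[\rho]}{2}(q-p)}\,\prod_{n=2}^{\infty}e^{\,\alpha^{n+1}\,\frac{\mathfrak K_{n+1}[\rho]}{(n+1)!}\left[(q-1)^{n}-(p-1)^{n}\right]}.$$
Since $\rho$ is non-uniform, the strict Jensen inequality used in Property~\ref{convexity_prop} forces $\mathfrak K_{2}[\rho]=\langle\log^{2}\rho\rangle-\langle\log\rho\rangle^{2}>0$, so the leading factor is genuinely of order $\exp(\tfrac{\alpha^{2}}{2}\mathfrak K_{2}[\rho](q-p))$ and does not degenerate.

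Next I would bound the $n$-th factor of the product. With the elementary estimate $\bigl|(q-1)^{n}-(p-1)^{n}\bigr|\le n\,|q-p|\,r^{n-1}$, where $r:=\max\{|q-1|,|p-1|\}$, its exponent has magnitude at most $\alpha^{n+1}\,|\mathfrak K_{n+1}[\rho]|\,|q-p|\,r^{n-1}/n!$, so its size relative to the leading exponent $\tfrac{\alpha^{2}}{2}\mathfrak K_{2}[\rho](q-p)$ is $O\!\bigl((\alpha r)^{n-1}\bigr)$ up to $\alpha$-independent, $\rho$-dependent constants. Hence the whole tail is negligible — and \eqref{eq:Co0} holds for $\rho_\alpha$ — only when $\alpha r\ll1$, i.e.\ only when both $p$ and $q$ lie in an interval of half-width $O(1/\alpha)$ around $1$. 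As the complexity of $\rho_\alpha$ grows ($\alpha\to\infty$) this admissible window collapses onto the single point $p=q=1$, which is precisely the meaning of ``extremely tiny''.

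For the cumulant ordering I would use the corollary of Property~\ref{prop:cumulants}, namely $\mathfrak K_{n+1}[\rho_\alpha]/\mathfrak K_{n}[\rho_\alpha]=\alpha\,\mathfrak K_{n+1}[\rho]/\mathfrak K_{n}[\rho]$: whenever $\alpha>\bigl|\mathfrak K_{n}[\rho]/\mathfrak K_{n+1}[\rho]\bigr|$ one obtains $|\mathfrak K_{n+1}[\rho_\alpha]|>|\mathfrak K_{n}[\rho_\alpha]|$, so for $\alpha$ large the higher-order entropic cumulants overtake the lower ones. Combined with the previous paragraph, this shows that once $p,q$ leave the tiny window the genuine behaviour of $C_{p,q}[\rho_\alpha]$ — through the expansion of Proposition~\ref{prop3} — is governed by the high-order cumulants rather than by $\mathfrak K_{2}$ alone.

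The main obstacle is that the statement is asymptotic in character (``extremely tiny'', ``dominate''), and the threshold $\alpha>|\mathfrak K_{n}[\rho]/\mathfrak K_{n+1}[\rho]|$ depends on $n$; making $|\mathfrak K_{n+1}[\rho_\alpha]|>|\mathfrak K_{n}[\rho_\alpha]|$ hold for all $n$ simultaneously would require a uniform bound on the consecutive cumulant ratios of the seed $\rho$, which need not exist for an arbitrary density. I would therefore state the conclusion, exactly as Proposition~\ref{prop6} does in the opposite regime, for $\alpha$ sufficiently large with the series of Proposition~\ref{prop3} truncated at any fixed order, the truncation being legitimate precisely because that series is assumed convergent.
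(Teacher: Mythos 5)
Your argument follows essentially the same route the paper takes — the scaling $C_{p,q}[\rho_\alpha]=\left(C_{p_\alpha,q_\alpha}[\rho]\right)^{\alpha}$ combined with the cumulant rescaling $\mathfrak K_n[\rho_\alpha]=\alpha^n\mathfrak K_n[\rho]$, with validity of Eq.~\eqref{eq:Co0} confined to the window $|p-1|,|q-1|=O(1/\alpha)$ — and since the paper offers no formal proof of Proposition~\ref{prop9} (it only summarizes the preceding informal discussion), your mean-value-theorem bound on the tail factors is if anything more careful than the original. Your caveat that $|\mathfrak K_{n+1}[\rho_\alpha]|>|\mathfrak K_{n}[\rho_\alpha]|$ cannot hold uniformly in $n$ without a uniform bound on the consecutive cumulant ratios of the seed density is a genuine point the paper glosses over, and restricting the claim to each fixed $n$ with $\alpha$ sufficiently large is the correct repair.
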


  \subsection*{Example}
  In the following we give an example with numerical values. Note that, due to LMC-Rényi is invariant under replication transformation the number $N$ of different regions does not play a relevant role in the behavior of this complexity measure. So, for our purpose it is enough a simple example with $N=3$. 
  
  We are going to represent an initial distribution with three steps whose heights are $h_1=\frac32, h_2=1, h_3=\frac12$ and their weights are $w_1=w_2=w_3=\frac13$. In Figure \ref{figesc} we show the complexity reduction process through the here studied transformation
  \begin{figure}[H]\centering
   \begin{minipage}{0.4\linewidth}
   \includegraphics[width=\linewidth]{./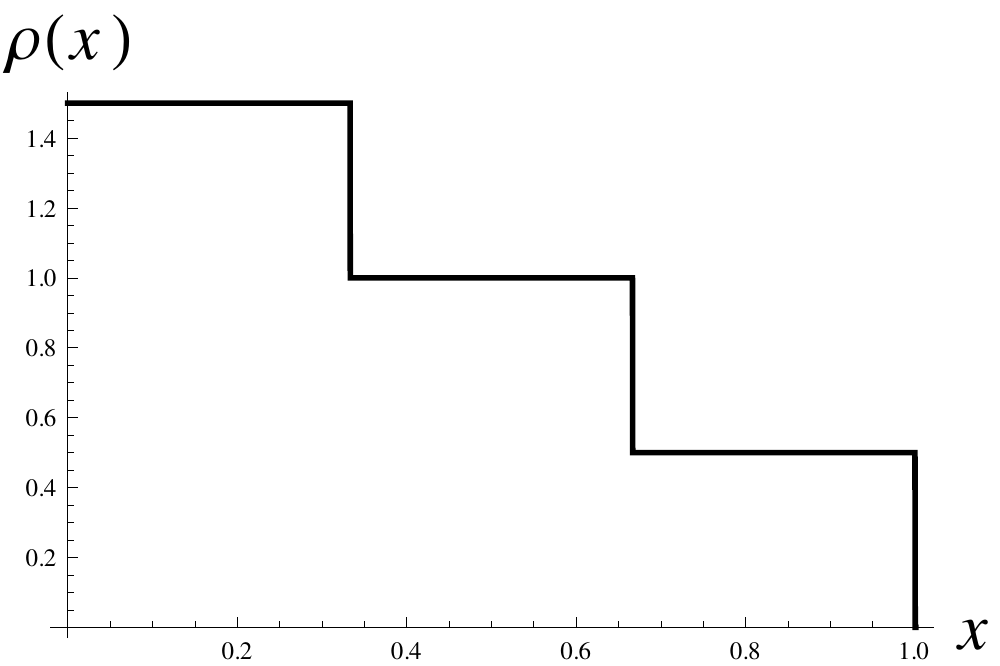}
   \end{minipage}
  \begin{minipage}{0.4\linewidth}
   \includegraphics[width=\linewidth]{./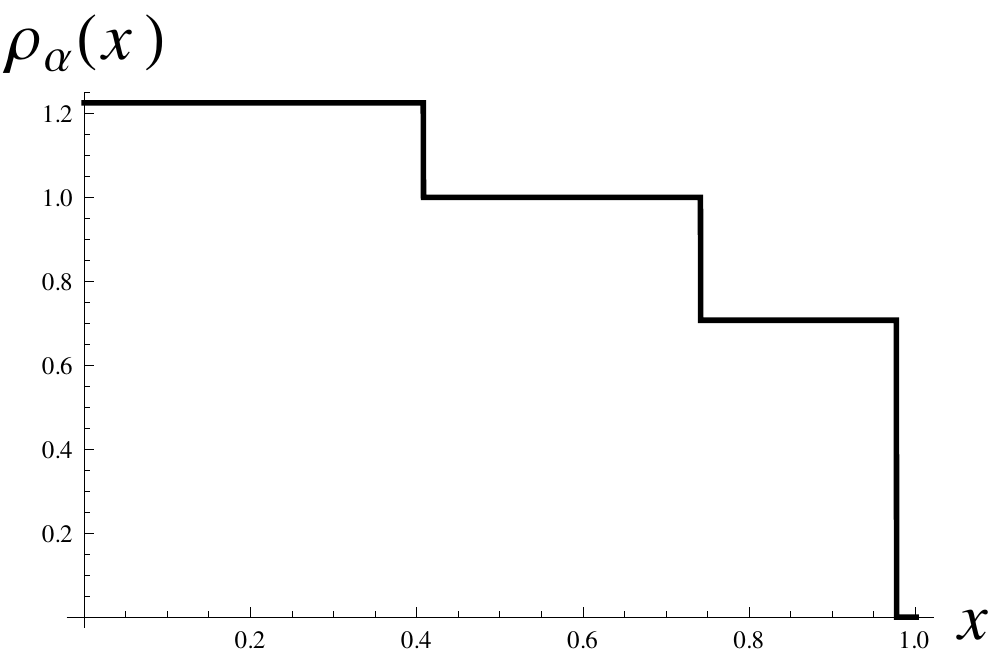}
     \end{minipage}
     \begin{minipage}{0.4\linewidth}
      \includegraphics[width=\linewidth]{./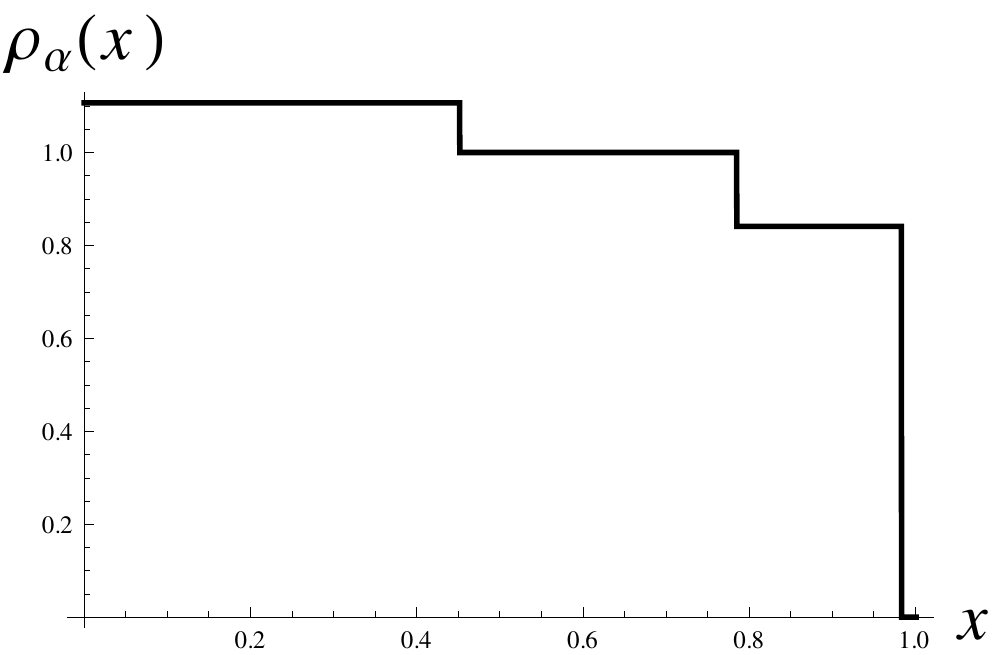}
  \end{minipage}
  \begin{minipage}{0.4\linewidth}
   \includegraphics[width=\linewidth]{./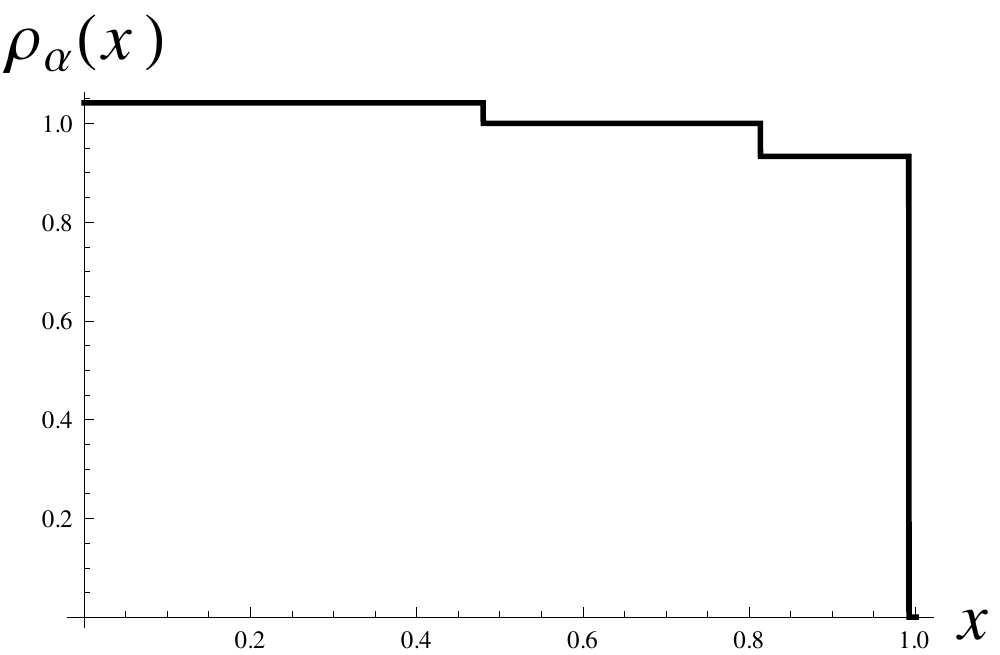}
   \end{minipage}
          \caption{Transformed density $\rho_\alpha (x)$ for different values of the transformation parameter $\alpha= 1, \frac12,\frac14,\frac1{10}.$}
          \label{figesc}
  \end{figure}
  It is interesting to give the values of the LMC complexity for these distributions, $C_{LMC}[\rho_\alpha]\simeq1.06923,1.01818,1.00468,1.00076$ for $\alpha=1,0.5,0.25,0.1$ respectively.
  In Figure \ref{figesc2} we represent the complexity increasing process of this probability density
  \begin{figure}[H]\centering
   \begin{minipage}{0.4\linewidth}
   \includegraphics[width=\linewidth]{./figures/piecewise.pdf}
   \end{minipage}
  \begin{minipage}{0.4\linewidth}
   \includegraphics[width=\linewidth]{./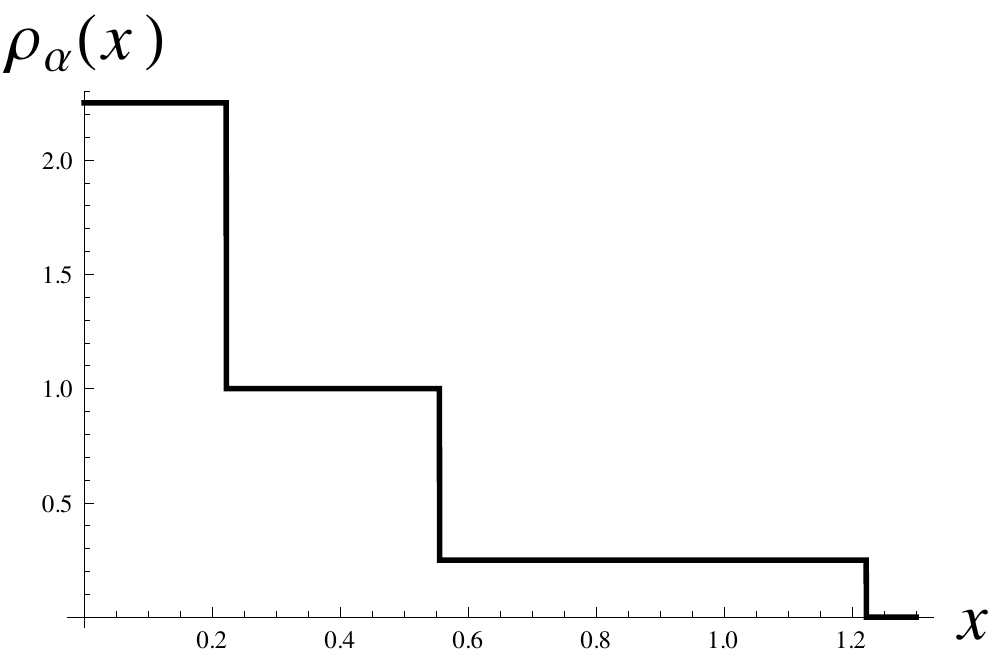}
     \end{minipage}
     \begin{minipage}{0.4\linewidth}
      \includegraphics[width=\linewidth]{./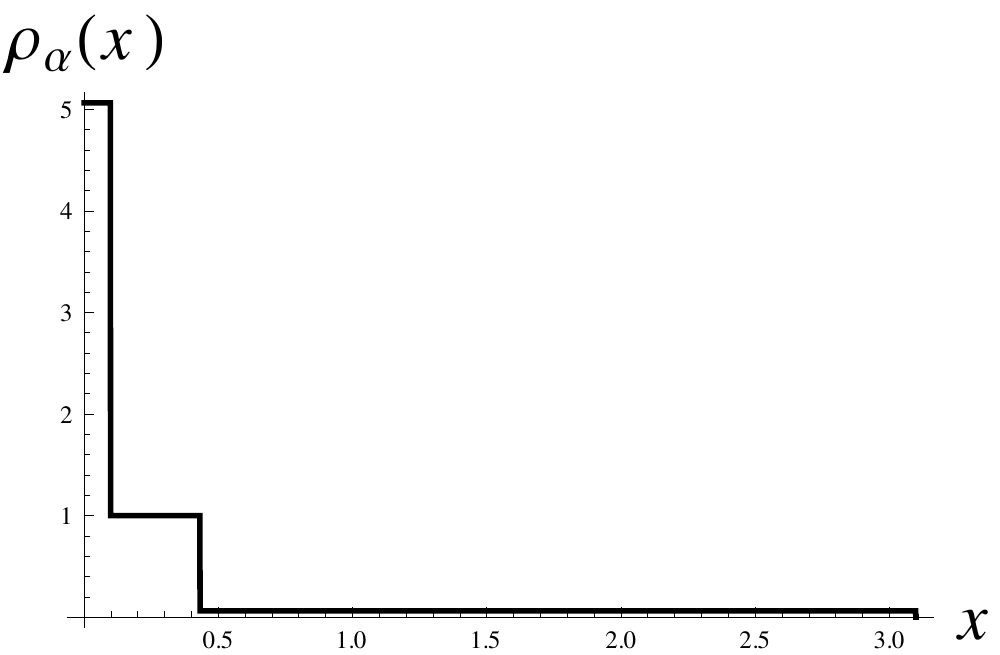}
  \end{minipage}
  \begin{minipage}{0.4\linewidth}
   \includegraphics[width=\linewidth]{./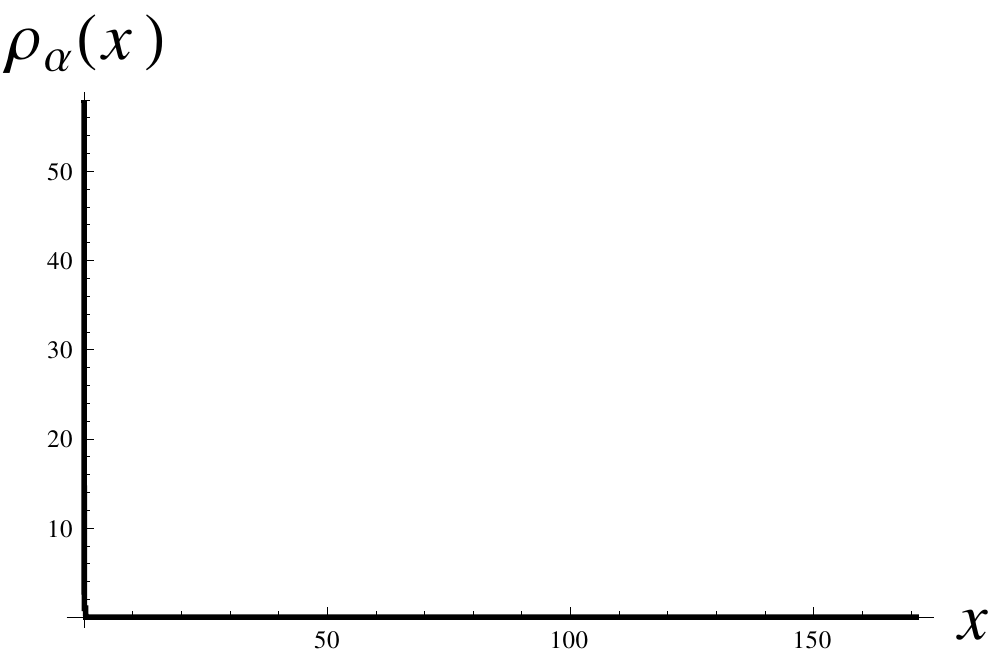}
   \end{minipage}
          \caption{Transformed density $\rho_\alpha (x)$ for different values of the transformation parameter $\alpha= 1, 2,4,10.$}
          \label{figesc2}
  \end{figure}
  Note that, in the case $\alpha=10$, one has that $(w_1)_\alpha\simeq 0.008$ and $(h_1)_\alpha\simeq57$, $(w_2)_\alpha\simeq 0.03$ and $(h_2)_\alpha=1$ and finally $(w_3)_\alpha\simeq 170$ and $(h_3)_\alpha\simeq0.001$. So, in this case the graphic representation is really difficult to be performance. For the sake of illustration, we give the case $\alpha=100$, for which  $(w_1)_\alpha\simeq 10^{-18}$ and $(h_1)_\alpha\simeq 4\times10^{17}$, $(w_2)_\alpha\simeq 0.03$ and $(h_2)_\alpha=1$ and finally $(w_3)_\alpha\simeq 2\times10^{29}$ and $(h_3)_\alpha\simeq7\times10^{-31}$, what seems to be near to impossible to be graphically represented with accuracy (even using a logarithmic scale in both axes) while still being a $3$-piecewise density.    The values of the LMC complexity for these densities are $C_{LMC}[\rho_\alpha]\simeq 1.06923,1.25988,2.02809, 12.1843,3\times10^{13}$ for $\alpha=1,2,4,10,100$ respectively. 
  
\section{q-exponential and power-law decaying densities} \label{sec:q-exp}

The exponential and q-exponential distributions are fundamental tools in the extensive and non-extensive formalisms \cite{Tsallis09}. They can be obtained by maximizing the differential  Rényi and Tsallis entropies with a suitable constraint\cite{Tsallis88}, or by maximizing differential Shannon  entropy with some tail constraints \cite{Bercher08}. In this section we will study the q-exponential distribution in the framework of the differential-escort transformations which will able to naturally relate it to the exponential one. 
      
      The exponential function $\mathcal E(x)= e^{-x}$, with $x\in[0,\infty)$, is recovered taking the limit $q\to1$ in the family of q-exponential functions defined by
      \begin{equation}
e_q(x)=(1+(1-q)x)_+^{\frac1{1-q}},
      \end{equation}
      where $(t)_+=\max\{t,0\}$.
Tsallis introduced \cite{Tsallis88} the q-exponential probability densities which are proportional to $e_q(-x)$.  For convenience, we denote the q-exponential densities as 
  \begin{equation}\label{eq:not-Eq}
\mathcal E_q (y)\equiv e_{q}\left(-\frac y{2-q}\right).
  \end{equation}
Note that when $q\in(1,2)$ the support is non compact and the tail of the probability density decays as a heavy-tailed distribution; in contrast when $q<1$, the support is compact.

It is worth to realize that the standard escort transformation of a q-exponential density is another q-exponential; indeed,
\begin{equation}
E_\alpha[\mathcal E_q]=\mathcal E_{q'},\quad q'=1+\frac{q-1}\alpha.
\end{equation}
Note that, if $q=1$ then $q'=1$; that is to say, the escort transformation of an exponential distribution is another exponential distribution. On the other hand, if $q>1$ the support of $\mathcal E_q$ is not compact and so necessarily $\alpha>q-1>0$ for the sake of satisfying the convergence condition given in (\ref{escort}); and in consequence, when $q\in(1,2)$ necessarily $q'\in(1,2)$. Finally, when $q<1$  one has that $q'<1$ for any $\alpha>0$. In other words, the escort transformation $E_\alpha$ keep unchanged the three regions of the parameter $q$ ($q<1, q=1, q>1$); this behavior is expected since  the standard escort transformation keep the support invariant. 

This behavior is totally different for the differential-escort transformation, which indeed changes the length of the support. In fact, it transforms not only a q-exponential distribution in another one, but also: given any initial value of the parameter $q<2$, any other parameter $q'<2$ can be obtained through the use of $\mathfrak E_\alpha$ with $\alpha\neq1$,  as we shall see  below. 
      
From definition \ref{definition}, given any $\alpha$ one has that 
\begin{equation}\label{eq:EE}
\mathfrak E_\alpha[\mathcal E](y)=e^{-\alpha\, x(y)},
\end{equation} 
with
      \begin{equation}\label{eq:exp-ch}
y(x)=\int_0^xe^{(\alpha-1)t}\,dt=\frac{1}{\alpha-1}\left(e^{(\alpha-1)x}-1\right),\quad \alpha\neq1;
      \end{equation}
and so one easily obtains

\begin{equation}\label{eq:x(y)}
x(y)=\frac1{\alpha-1}\log\left(1+(\alpha-1)y\right).
\end{equation}
So, inserting \eqref{eq:x(y)} in \eqref{eq:EE} we have that            
\begin{equation}\label{eq:exp1}
\mathfrak E_\alpha[\mathcal E](y)=\left(1+(\alpha-1)y\right)^{\frac{\alpha}{1-\alpha}},
\end{equation}
where, from Eq. \eqref{eq:exp-ch}, $y\in[0,\infty]$ for $\alpha>1$ and $y\in[0,\frac{1}{1-\alpha}]$ when $\alpha<1$. In fact,  we can rewrite Eq. \eqref{eq:exp1} as
\begin{equation}\label{eq:exp2}
 \mathfrak E_\alpha[\mathcal E](y)=e_{\frac{2\alpha-1}\alpha}(-\alpha y).
 \end{equation}
Or equivalently, choosing $\alpha=\frac1{2-q}$ and using the notation introduced in Eq. \eqref{eq:not-Eq}, one can write

 \begin{equation}\label{eq:exp3}
 \mathfrak E_{\frac1{2-q}}[\mathcal E]=\mathcal E_q. 
 \end{equation}         

On the other hand,  taking into account that $\mathfrak E_1[\rho]=\rho$ and considering the composition property \ref{compo-prop} and Eq. \eqref{eq:exp3} one obtains the identities
      
      \begin{equation}
     \mathfrak E_{2-q}[\mathcal E_q]=\mathfrak E_{2-q}[\mathfrak E_{\frac{1}{2-q}}[\mathcal E]]=\mathfrak E_{\frac{2-q}{2-q}}[\mathcal E]= \mathcal E,\quad \forall q<2.
      \end{equation}         
      
     From which, taking any couple $q,\tilde q<2$ one can write the following relation between q-exponential densities
\begin{equation}
\mathfrak E_{{2-q}}[\mathcal E_q]=\mathfrak E_{{2-\tilde q}}[\mathcal E_{\tilde q}],
\end{equation}         
or equivalently, using again the composition property,
     \begin{equation}
\mathfrak E_{\frac{2-q}{2-\tilde q}}[\mathcal E_q]=\mathcal E_{\tilde q},
\end{equation}
or as well $\mathfrak E_{\alpha}[\mathcal E_q]=\mathcal E_{\overline q}$ with 

\begin{equation}
\overline q=2+\frac{q-2}\alpha.  
\end{equation}
Thus, as we have previously anticipated, starting with any $q<2$ we can obtain any other value $\overline q<2$. In particular, when $\overline q>1$ it occurs that $\alpha>2-q$, when $\overline q=1$ one has  $\alpha=2-q$, and  when $\overline q<1$ it happens that $\alpha<2-q$. Note that the value $\alpha=2-q$ plays a critical role. Finally, it is worth mentioning that, when $\alpha>0$ then $\overline q<2$, but taking $\alpha<0$ one obtains $\overline q>2$ which normally is not considered, however note that these densities are correctly defined and they satisfy the normalization condition $\int_{\Lambda_\alpha}\rho_\alpha(y)=1$.

These results are a little bit  extended in the next lemma: 
%

 \begin{lemma}\label{lemma}
          Let $\rho(x)>0,\,\forall x\in\mathbb [0,\infty),$ be a probability density, such that the tail of $\rho(x)$ decreases as $\mathcal O(x^{-\beta}),\quad \beta>1$. Then, for $\alpha>\alpha_c=\frac{\beta-1}{\beta}$, the tail of the transformed distribution $\rho_\alpha(y)$ decreases as $\mathcal{O}\left(y^{\frac{-\beta\alpha}{1-\beta(1-\alpha)}}\right)$. On the other hand, for $\alpha<\alpha_c$, the distribution $\rho_\alpha$ has a compact support. Finally, when $\alpha=\alpha_c$ the support is non-compact and there is an exponential decay. 
\end{lemma}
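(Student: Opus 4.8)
The plan is to analyze the asymptotic behavior of the change of variables $y=y(x)=\int_0^x [\rho(t)]^{1-\alpha}\,dt$ for large $x$, and then transfer that information to $\rho_\alpha(y)=[\rho(x(y))]^\alpha$. First I would write $\rho(x)\sim c\,x^{-\beta}$ as $x\to\infty$ (with $c>0$), so that $[\rho(x)]^{1-\alpha}\sim c^{1-\alpha}\,x^{-\beta(1-\alpha)}$. The convergence or divergence of $\int^\infty [\rho(t)]^{1-\alpha}\,dt$ is then governed by the exponent $\beta(1-\alpha)$: the integral diverges iff $\beta(1-\alpha)\le 1$, i.e.\ iff $\alpha\le 1-\tfrac1\beta=\alpha_c$, and converges iff $\alpha>\alpha_c$. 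This already gives the trichotomy of the support: for $\alpha<\alpha_c$ the total length $y(\infty)=\int_0^\infty[\rho(t)]^{1-\alpha}\,dt$ is finite, so $\Lambda_\alpha$ is compact; for $\alpha\ge\alpha_c$ it is infinite, so $\Lambda_\alpha$ is non-compact.

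Next, for $\alpha>\alpha_c$ I would integrate the asymptotic relation: since $\beta(1-\alpha)<1$, one has $y(x)=\int_0^x[\rho(t)]^{1-\alpha}\,dt\sim \frac{c^{1-\alpha}}{1-\beta(1-\alpha)}\,x^{1-\beta(1-\alpha)}$ as $x\to\infty$, hence inverting, $x(y)\sim c'\,y^{\frac1{1-\beta(1-\alpha)}}$ for a suitable constant $c'>0$. Substituting into $\rho_\alpha(y)=[\rho(x(y))]^\alpha\sim c^\alpha\,[x(y)]^{-\alpha\beta}$ yields
\begin{equation*}
\rho_\alpha(y)\sim c''\,y^{\frac{-\alpha\beta}{1-\beta(1-\alpha)}},
\end{equation*}
which is the claimed power law $\mathcal O\!\left(y^{\frac{-\beta\alpha}{1-\beta(1-\alpha)}}\right)$. (As a consistency check, one can verify directly that this exponent is $<-1$ precisely when $\alpha>\alpha_c$, so $\rho_\alpha$ is integrable, as it must be by Property \ref{diff-inv}; and one recovers the q-exponential case $\rho(x)=\mathcal E_q(x)$ worked out above.) For the borderline case $\alpha=\alpha_c$, one has $\beta(1-\alpha)=1$, so $[\rho(x)]^{1-\alpha}\sim c^{1-\alpha}x^{-1}$ and therefore $y(x)\sim c^{1-\alpha}\log x$, i.e.\ $x(y)\sim \exp(y/c^{1-\alpha})$; feeding this into $\rho_\alpha(y)=[\rho(x(y))]^\alpha\sim c^\alpha x(y)^{-\alpha\beta}$ gives an exponential decay $\rho_\alpha(y)\sim c^\alpha\exp(-\alpha\beta\,y/c^{1-\alpha})$ on the non-compact support, as claimed.

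The main obstacle is making the passage from the asymptotic equivalence $\rho(x)\sim c\,x^{-\beta}$ to the asymptotic equivalence of the \emph{integral} $y(x)$ and then of its \emph{inverse} $x(y)$ fully rigorous: one needs a version of Karamata's theorem (integration of regularly varying functions preserves regular variation) together with the fact that the inverse of a regularly varying function is again regularly varying, with reciprocal index. The hypothesis $\beta>1$ guarantees $\rho$ itself is integrable near infinity (so $\rho$ is a genuine density), and the bijectivity of $y(x)$ from Definition \ref{definition} guarantees $x(y)$ is well defined and monotone, so these standard facts apply. Everything else is bookkeeping of exponents. I would therefore structure the write-up as: (i) reduce to the regularly-varying asymptotics of $[\rho]^{1-\alpha}$; (ii) dichotomy for $y(\infty)$, giving the support statement; (iii) Karamata/inverse-function asymptotics for $x(y)$ when $\alpha>\alpha_c$, then substitute; (iv) the $\alpha=\alpha_c$ logarithmic/exponential computation.
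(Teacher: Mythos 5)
Your proposal is correct and follows essentially the same route as the paper's proof: decide compactness of $\Lambda_\alpha$ from the convergence of $\int^\infty[\rho(t)]^{1-\alpha}\,dt$, then integrate the power-law asymptotics of $dy/dx$ and invert to obtain $x(y)$ in the three regimes $\alpha>\alpha_c$, $\alpha=\alpha_c$, $\alpha<\alpha_c$. (One slip: ``diverges iff $\alpha\le\alpha_c$'' should read $\alpha\ge\alpha_c$, as your very next sentence correctly states; and the Karamata-type justification you flag is not addressed in the paper either, so your write-up is if anything the more careful of the two.)
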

      
\begin{proof}
              Let $\alpha\in\mathbb R$. 
              
              Given $x>>1$, the original density fulfilled $\rho(x)\sim x^{-\beta}$. On the other hand the variable change is defined as $y(x)=\int_0^x[\rho(t)]^{1-\alpha}\,dt$. 
               Then, the length of the support of $\rho_\alpha$ is given by $W_0[\rho_\alpha]=W_{1-\alpha}[\rho]=\int_0^\infty \rho(x)^{1-\alpha}dx\sim\int_{a>0}^\infty x^{-\beta(1-\alpha)}$. So, it is clear that the support of $\rho_\alpha$ is compact iff $\alpha<\frac{\beta-1}{\beta}$, and  in the case $\alpha\ge\frac{\beta-1}\beta$ we have that $\lim_{x\to\infty} y(x)\to\infty.$

In the case $\alpha\ge\frac{\beta-1}\beta$, one can suppose $x>>1$, and so $\rho_\alpha(y)\propto[x(y)]^{-\beta\alpha}$, and in the other hand 
               $\frac{dy}{dx}=\rho(x)^{1-\alpha}\propto x^{-\beta(1-\alpha)}$.
               
               Note that when $\alpha=\alpha_c=\frac{\beta-1}{\beta},$ then $-\beta\,(1-\alpha)=-1$, and so $y(x)\propto \ln x$, or equivalently $x(y)\propto e^y$. In this case we have that $\rho_\alpha(y)\propto[x(y)]^{-\beta\alpha}\propto e^{-(\beta-1) y}$. 
               
               Finally, when $\alpha> \frac{\beta-1}\beta$,  so $y(x)\propto x^{1-\beta(1-\alpha)}$; i.e, when $x,y>>1$ we have that $x(y)\propto y^{\frac1{1-\beta(1-\alpha)}}$. Thus,  $\rho_\alpha(y)\propto y^{\frac{-\beta\alpha}{1-\beta(1-\alpha)}}$. 
\end{proof}
It is interesting to note that under the conditions of Lemma \ref{lemma}, and in the high complexity limit, all the expected values become to be infinite, as well as the respective entropic moments $W_q$ when $q<1$. This is in concordance with the proposition \ref{prop7}, which states that, the entropic moments of the density are not well defined in the high complexity limit.

It is known that any distribution is characterized by its standard moments, provided that they exist. However, power-law-decaying probability densities does not fully satisfy this condition. In order to tackle this problem, Tsallis et al. \cite{Tsallis-Plastino09} purposed to use escort mean values. This make sense, taking into account that the escort density has more well defined moments than the original ones by choosing adequately the escort parameter. However, note that all escort transformation of a heavy tailed density remains being a heavy tailed, that is to say, a dense set of moments (with real parameter) remains always infinite. Contrary, as stated by Lemma \ref{lemma}, through the differential-escort density, we can always find a probability density which all its real moments correctly defined, at least for power-law-decaying probability densities. For these reasons, the characterization via \textit{diffierential escort} densities seems to be more accurate than via escort ones.

\section{Conclusions}\label{sec:conc}
In this paper we have presented the concept of differential-escort transformation of an univariate probability density. Its basic mathematical properties as composition and strong probability invariance have been studied. Then, we have  shown the regular behavior of the differential Shannon, Rényi and Tsallis  entropies for the differential-escort distributions. Moreover, the convex behavior of the Rényi entropy with respect to the differential-escort operation has been the keystone in the proof of  the monotonicity property of the LMC-Rényi complexity measure. Note that the differential-escort operation allows to define equivalence classes of probability densities where exists a total order with respect to their LMC-Rényi complexity. Later we have analyzed the statistical properties of a general probability density when it is  deformed to both extreme complexity cases, the low and high complexity limits. Finally, we have studied the behavior of the exponential and q-exponential densities, showing not only the stability of the q-exponential family, but also the existence of a critical value of the deformation parameter for what  the behavior of the tail, if any, dramatically changes to an exponential one. 

Interestingly, the action of this operation over a probability density allows for a clear interpretation of the probability conservation. Indeed, the conservation of the probability in any region of the transformed-space is clear by construction, what has a clear mass conservation interpretation.

On the other hand, the simplicity of the differential-escort transformations together with the general character of the presented results seem to indicate that this way of thinking would deserve to be explored from a more general point of view. Let us advance for example that the use of a differential-escort-based methodology has allowed for a huge generalization of the Stam inequality \cite{Zozor17}. 
\section*{Acknowledgement}

I am very grateful to Prof. J. S. Dehesa for useful discussions.

\appendix 
\section*{References}

\appendix

\end{document}